\newtheorem{theorem}{Theorem}
\newtheorem{definition}[theorem]{Definition}
\newtheorem{corollary}[theorem]{Corollary}
\newtheorem{example}[theorem]{Example}
\begin{document}
\newcommand{\elDL}{\mathcal{EL}}
\newcommand{\belDL}{\mathcal{BE\!L}}
\newcommand{\ALC}{\ensuremath{\mathcal{ALC}}\xspace}
\newcommand{\EL}{\ensuremath{\mathcal{E\!L}}\xspace}
\newcommand{\BALC}{\ensuremath{\mathcal{BALC}}\xspace}
\newcommand{\BEL}{\ensuremath{\mathcal{BEL}}\xspace}

\newcommand{\ExpTime}{{\sc ExpTime}\xspace}

\newcommand{\Amf}{\ensuremath{\mathfrak{A}}\xspace}

\newcommand{\Amc}{\ensuremath{\mathcal{A}}\xspace}
\newcommand{\Bmc}{\ensuremath{\mathcal{B}}\xspace}
\newcommand{\Imc}{\ensuremath{\mathcal{I}}\xspace}
\newcommand{\Jmc}{\ensuremath{\mathcal{J}}\xspace}
\newcommand{\Kmc}{\ensuremath{\mathcal{K}}\xspace}
\newcommand{\Omc}{\ensuremath{\mathcal{O}}\xspace}
\newcommand{\Pmc}{\ensuremath{\mathcal{P}}\xspace}
\newcommand{\Tmc}{\ensuremath{\mathcal{T}}\xspace}
\newcommand{\Vmc}{\ensuremath{\mathcal{V}}\xspace}

\newcommand{\val}{\ensuremath{\mathsf{val}}\xspace}
\newcommand{\pa}{\ensuremath{\mathsf{pa}}\xspace}

\newcommand{\tboxDL}{\mathcal{T}}
\newcommand{\aboxDL}{\mathcal{A}}
\newcommand{\kbDL}{\mathcal{K}}
\newcommand{\bnDL}{\mathcal{B}}

\newcommand{\domainDL}{\Delta^{\mathcal{I}}}
\newcommand{\interpFuncDL}{\cdot^{\mathcal{I}}}

\newcommand{\vDomainDL}{\Delta^{\mathcal{V}}}
\newcommand{\vInterpFuncDL}{\cdot^{\mathcal{V}}}
\newcommand{\valFuncDL}{v^{\mathcal{V}}}

\newcommand{\modelDL}{\mathcal{I}}
\newcommand{\vModelDL}{\mathcal{V}}
\newcommand{\probModelDL}{\mathcal{P}}

\newcommand{\balcStmnt}[2]{#1 ^ #2}

\newcommand{\et}[1]{\ensuremath{\mathsf{#1}}}

\newcommand{\red}[1]{\textcolor{red}{#1}}
\newcommand{\modify}[1]{{#1}}

\title{The Probabilistic Description Logic \BALC}
%
%
\author[Leonard Botha, Thomas Meyer and Rafael Pe{\~n}aloza]
	{LEONARD BOTHA, THOMAS MEYER \\
	University of Cape Town and CAIR, South Africa
	 \and RAFAEL PE\~NALOZA\thanks{Part of this work was carried out while this author was at
the Free University of Bozen-Bolzano, Italy.} \\
	University of Milano-Bicocca, Italy}
%
%
%
\maketitle             
\begin{abstract} 
Description logics (DLs) are well-known knowledge representation formalisms focused on the representation
of terminological knowledge. Due to their first-order semantics, these languages (in their classical form) are 
not suitable for representing and handling uncertainty.
A probabilistic extension of a light-weight DL was recently proposed for dealing
with certain knowledge occurring in uncertain contexts. In this paper, we continue that line of research by 
introducing the Bayesian extension \BALC of the propositionally closed DL \ALC. We present a tableau-based 
procedure for deciding consistency, and adapt it to solve other probabilistic, contextual, and general inferences 
in this logic. We also show that all these problems remain \ExpTime-complete, the same as reasoning in the 
underlying classical \ALC. Under consideration in Theory and Practice of Logic Programming (TPLP).
\end{abstract}
\section{Introduction}

Description logics (DLs)~\cite{BCea-07} are a family of logic-based knowledge representation formalisms 
designed to describe the terminological knowledge of an application domain. Description Logics have been 
successfully applied to model several domains, with some particularly successful applications coming 
from the biomedical sciences. This is due to their clear syntax,
formal semantics, the existence of efficient reasoners, and their expressivity. However, in their classical form,
these logics are not capable of dealing with uncertainty, which is an unavoidable staple in real-world knowledge. 
To overcome this limitation, several probabilistic extensions of DLs have been suggested in the literature. 
The landscape of probabilistic extensions of DLs is too large to be covered in detail in this work. These logics
differentiate themselves according to their underlying logical formalism, their interpretation of probabilities,
and the kind of uncertainty that they are able to express. For a relevant 
survey,
where all these differences are showcased, we refer the interested reader to \cite{LuSt-JWS08}.
\modify{More recent work where probabilistic DLs are discussed can be found in \cite{CeLu-18,GJLR-17}.}

A \modify{related} probabilistic DL, called \BEL \modify{\cite{CePe-JAR16}}, is the Bayesian extension of the light-weight 
\EL \cite{BaBL05}. 
This logic focuses on modelling
certain knowledge that holds only in some contexts, together with uncertainty about the current context. 
\modify{\BEL is based on a \emph{subjective} interpretation of probabilities or, in Halpern's terminology, it corresponds
to a Type II probabilistic logic \cite{Halp-AIJ90}.}
One advantage of the formalism underlying \BEL is that it separates the contextual knowledge, which
is \emph{de facto} a classical ontology, from the likelihood of observing this context, which can be influenced
by external factors. We present a simple example
of the importance of contextual knowledge. Consider the knowledge of construction techniques and materials
that vary through time. In the context of a modern house asbestos and lead pipes are not observable, 
while in some classes of houses, built during the 1970s, we observe both. However, in all contexts we know
that asbestos and lead in drinking water have grave health effects. When confronted with a random
house, one might not know to which of these contexts it belongs, and by extension whether it is safe to
live in, or drink the water that flows through its pipes. Still, construction data may be used to derive the 
probabilities of these contexts.

To allow
for complex probabilistic relationships between the contexts without needing to result to 
\modify{incompatible} independence
assumptions, their joint probability distribution is encoded via 
a Bayesian network (BN)~\cite{Pear-85}. 
This logic is closely related to the probabilistic extension of DL-Lite \cite{ACKZ09} proposed previously in~\cite{AmFL08}, but uses 
a less restrictive semantics which resembles more the open-world assumption from DLs (for a discussion on the 
differences between the semantics of these logics, see~\cite{CePe-JAR16}).
Another similar proposal is Probabilistic Datalog$^\pm$~\cite{GottlobLMS13}, with the difference that uncertainty is 
represented via a Markov Logic Network, instead of a BN.
Since the introduction of \BEL, the main notions behind it have been 
generalised to arbitrary ontology languages~\cite{Ceyl-18}. However, it has also been shown that efficient and 
complexity-optimal reasoning methods can only be achieved by studying the properties of each underlying
ontology language~\cite{CePe-JAR16}.
\modify{Finally, another family of probabilistic DLs using Type II semantics was proposed in \cite{GJLR-17}, which we refer to as Prob-DLs. The biggest difference between Bayesian DLs and Prob-DLs is that the latter models uncertain concepts, while the 
former models uncertain knowledge. For example, Prob-DLs can refer to the class of all individuals having a high probability of a disease infection, but cannot express that an entailment holds with a given probability.}

In this paper, we continue with that line of research and study the
Bayesian extension of the propositionally closed DL \ALC. As our main result, we present an algorithm, based
on a glass-box modification of the classical tableaux method for reasoning in \ALC. 
\modify{Our algorithm is able to derive those contexts which are needed to determine the (in)consistency of a \BALC knowledge base.}
Using this algorithm, we then describe an effective method
for deciding consistency of a \BALC knowledge base. We also provide a tight \ExpTime complexity bound for 
this problem.

This is followed by a study of several crisp and probabilistic
variants of the standard DL decision problems; namely, concept satisfiability, subsumption, and instance checking.
Interestingly, our work shows that all our problems can be reduced to some basic computations over a context
describing inconsistency, and hence are \ExpTime-complete as well. These complexity bounds are not 
completely surprising, given the high complexity of the classical \ALC. However, our tableaux-based algorithm
has the potential to behave better in practical scenarios.
This work details and deepens results that have previously been presented in~\cite{Both-18,BoMP18,BoMP19}

\section{Preliminaries}

We start by providing a brief introduction to Bayesian networks and the description logic (DL) \ALC, which form the basis for 
the probabilistic DL \BALC. For a more detailed presentation of these topics, we refer the interested reader to
\cite{Darw09,BHLS17}

\subsection{Bayesian Networks} 
Bayesian networks (BNs) are graphical models, which are used for representing the joint probability distribution 
(JPD) of several discrete random variables in a compact manner~\cite{Pear-85}. Before introducing these models
formally, we need a few definitions.

Given a random variable $X$, let $\val(X)$ denote the set of values that $X$ can take. For the scope of this paper,
we consider only random variables $X$ such that $\val(X)$ is finite. For
$x\in val(X)$, $X=x$ denotes the \emph{valuation} of $X$ taking the value $x$. This notation is 
extended to sets of variables in the obvious way.
Given a set of random variables $V$\!, a \emph{world} $\omega$ is a set of valuations containing exactly one 
valuation for every random variable $X \in V$\!. In other words, a world specifies an exact instantiation of all the variables
in $V$.

A \emph{$V$-literal} is an ordered pair of the form $(X_i, x)$, where 
$X_i \in V$ and $x \in val(X_i)$. $V$-literals are similar to valuations, but the syntactic difference is introduced to
emphasise their difference in use, as will become clear later. $V$-literals are a generalisation of Boolean literals,
which are typically denoted as $x$ or $\neg x$ for the Boolean random 
variable $X$. For simplicity, and following this connection, in this paper we will often use the 
notation $X$ for $(X, T)$ and $\neg X$ for $(X, F)$.
A \emph{$V$\mbox{-}context} is any set of $V$-literals. A $V$\mbox{-}context is \emph{consistent} if it contains at most one literal for
each random variable. We will often also call $V$-contexts \emph{primitive contexts}.

A \emph{Bayesian network} is defined as a pair $\bnDL = (G,\Theta)$ where $G = (V, E)$ is a directed acyclic graph (DAG) 
and $\Theta$ is a set of conditional probability distributions for every variable $X\in V$ given its parents $\pi(X)$
on the DAG $G$; more precisely, this set has the form
$\Theta = \{P(X=x|\pi(X)=\vec{x'}) \mid X\in V\}.$

The BN \Bmc specifies a full JPD for the variables in $V$ by considering independence assumptions depicted in the graph
$G$; namely, every variable is (conditionally) independent of all its \emph{non-descendants} given its parents. Under this
assumption, it is easy to see that the JPD of $V$ can be computed through the chain rule 
$$P(\vec{X}=\vec{x})=\prod_{X_i\in V}P(X_i=x_i\mid \pi(X_i)=\vec{x_j});$$
that is, the probability of a world is obtained by multiplying the conditional probabilities of the valuations found in the tables. \modify{We let $P_\Bmc$ denote the JPD defined by the BN \Bmc.}

\begin{example}
Figure~\ref{fig:BN} depicts a BN with four random variables denoting the likelihood of different characteristics
of a construction: $X$ stands for a post-1986 building, $Y$ for a renovated building, $Z$ for the presence
of lead pipes, and $W$ for the safety of drinking water. 
\begin{figure}[t]
\centering
\includegraphics{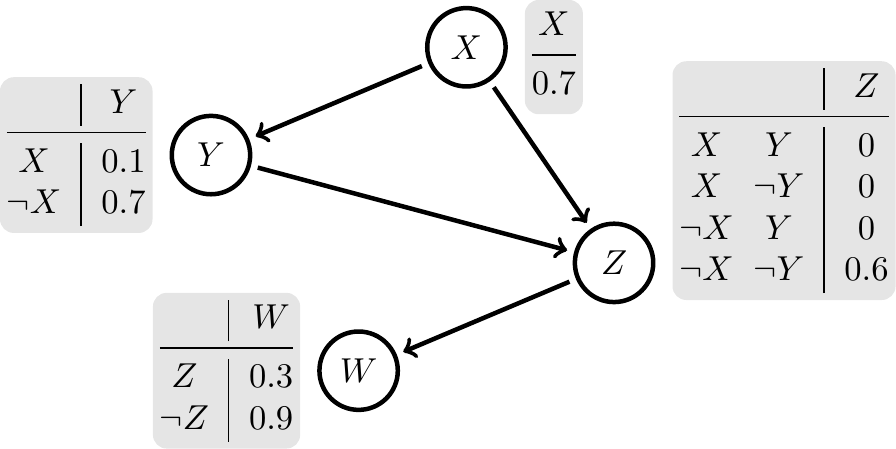}
\caption{A Bayesian network with four Boolean variables.}
\label{fig:BN}
\end{figure}
The tables attached to each node are the conditional probability distribution of that node given its parents. 
Hence, the BN expresses that a post-1986 building has only probability $0.1$ of being renovated.

The DAG expresses the conditional independence of $W$ and $Y$ given $Z$. That is, once that we observe the
value of $Z$ (if we know whether a house has lead pipes or not) then the probability of $W$ is not affected by
the knowledge of the renovation status of a house.

Through the chain rule, we can derive e.g., 
\begin{align*}
P(\neg X,\neg Y, Z, \neg W)={} & 0.3\cdot 0.3\cdot 0.6 \cdot 0.7 = 0.0378, & \text{and}\\
P(X,Y,Z,W)={} & 0.7\cdot 0.1\cdot 0\cdot 0.3=0.
\end{align*}
That is, it is very unlikely to find an old, non-renovated house, with lead pipes, and unsafe water; and 
renovated post-1986 houses with safe water cannot have lead pipes.
Note that to express the full JPD of these four variables directly, we would need a table with 16 rows.
\end{example}

\subsection{The Description Logic \ALC} 
\ALC is the smallest propositionally closed DL \cite{BCea-07,ScSc-91}. 
As with all DLs, its fundamental notions are those of
\emph{concepts} which correspond to unary predicates of first-order logic, and \emph{roles} corresponding to
binary predicates. Formally, given the mutually disjoint sets $N_I, N_C$, and $N_R$ of \emph{individual}, 
\emph{concept}, and \emph{role names}, respectively, the class of \ALC \emph{concepts} is built through the grammar rule
$$C::=A\mid\neg C\mid C\sqcap C\mid C\sqcup C\mid \exists r.C\mid \forall r.C,$$ 
where $A\in N_C$ and $r\in N_R$. 

The knowledge from an application domain is represented through a set of axioms, which restrict the possible
interpretations of concepts and roles. 
In \ALC, \emph{axioms} are either \emph{general concept inclusions} (GCIs) of the form $C\sqsubseteq D$,
\emph{concept assertions} $C(a)$, or \emph{role assertions} $r(a,b)$ where $a,b\in N_I, r\in N_R$, and $C,D$
are concepts.
An \emph{ontology} is a finite set of axioms. As customary in DLs, we sometimes partition an ontology into the \emph{TBox} 
\Tmc composed exclusively of GCIs, and the \emph{ABox} \Amc containing all concept and role assertions,
when it is relevant which kind of axiom is being used.

The semantics of \ALC is defined by interpretations akin to the first-order semantics. An \emph{interpretation} is a pair of the form 
$\Imc=(\Delta^\Imc,\cdot^\Imc)$
where $\Delta^\Imc$ is a non-empty set called the \emph{domain} and $\cdot^\Imc$ is the \emph{interpretation 
function} that maps every $a\in N_I$ to an element $a^\Imc\in\Delta^\Imc$, every $A\in N_C$ to a set
$A^\Imc\subseteq\Delta^\Imc$ and every $r\in N_R$ to a binary relation 
$r^\Imc\subseteq\Delta^\Imc{\times}\Delta^\Imc$. 
This interpretation function is inductively extended to arbitrary concepts by defining for any two concepts $C,D$:
\begin{itemize}
\item $(\neg C)^\Imc:=\Delta^\Imc\setminus C^\Imc$,
\item $(C\sqcap D)^\Imc:=C^\Imc\cap D^\Imc$,
\item $(C\sqcup D)^\Imc:=C^\Imc\cup D^\Imc$,
\item $(\exists r.C)^\Imc:=\{\delta\in\Delta^\Imc\mid \exists \eta\in C^\Imc. (\delta,\eta)\in r^\Imc\}$, and
\item $(\forall r.C)^\Imc:=\{\delta\in\Delta^\Imc\mid \forall \eta\in \Delta^\Imc. (\delta,\eta)\in r^\Imc\Rightarrow \eta\in C^\Imc\}$.
\end{itemize}
The interpretation \Imc \emph{satisfies} the GCI $C\sqsubseteq D$
iff $C^\Imc\subseteq D^\Imc$; the concept assertion $C(a)$ iff $a^\Imc\in C^\Imc$; and the role assertion
$r(a,b)$ iff $(a^\Imc,b^\Imc)\in r^\Imc$. We denote as $\Imc\models\alpha$ if \Imc satisfies the axiom $\alpha$. 
\Imc is a \emph{model} of the ontology \Omc (denoted by $\Imc\models\Omc$) iff it satisfies all axioms in
\Omc.

An important abbreviation in \ALC is the \emph{bottom} concept $\bot:=A\sqcap \neg A$, where $A$ is any concept name.
Clearly, this concept stands for a contradiction and, for any interpretation \Imc, $\bot^\Imc=\emptyset$.
Similarly, the \emph{top} concept $\top:=A\sqcup \neg A$ stands for a tautology and $\top^\Imc=\Delta^\Imc$ holds for
every interpretation \Imc.

The basic reasoning task in \ALC consists in deciding whether a given ontology \Omc is \emph{consistent}; that is, whether
there exists a model \Imc of \Omc. This problem is known to be \ExpTime-complete \cite{Schi-91,DoMa-00}. Other reasoning
problems, such as subsumption and instance checking, can be polynomially reduced to this one and hence preserve the
same complexity.

\begin{example}
\modify{
In \ALC it is possible to express the notion that water pipes do not contain lead through the GCI
$\et{WaterPipe}\sqsubseteq \forall \et{contains}.\neg\et{Lead}$. In addition, we can express that the object
\et{pipe1} is a water pipe
($\et{WaterPipe}(\et{pipe1})$); that \et{pipe1} contains \et{substance1} by the assertion 
$\et{contains}(\et{pipe1},\et{substance1})$; and that
\et{substance1} is in fact lead ($\et{Lead}(\et{substance1})$).}
Note that the ontology containing all four axioms is inconsistent.
\end{example}

\section{\BALC}
\label{sec:BALC}

We now introduce the probabilistic DL \BALC, which combines BNs, to compactly express joint probability distributions, and
\ALC to express background knowledge. In addition, \BALC can express \emph{logical} (as opposed to probabilistic) dependencies
between axioms. More precisely, \BALC axioms are required to hold only in some (possibly uncertain) contexts, which are 
expressed through annotations. The uncertainty of these contexts is expressed by the BN. These ideas are formalised next.

\begin{definition}[KB]
Let $V$ be a finite set of discrete random variables. 
A \emph{$V$\mbox{-}restricted axiom} ($V$-axiom) is an expression of the form 
$\alpha^\kappa$, where $\alpha$ is an \ALC axiom and $\kappa$ is a $V$-context. 
A \emph{$V$-ontology} is a finite set of $V$-axioms.
A $\BALC$ \emph{knowledge base} (KB) over $V$ is a pair $\kbDL = (\Omc,\Bmc)$ where 
$\Bmc$ is a BN over $V$\!, and $\Omc$ is a $V$\mbox{-}ontology.
\end{definition}
Note that the contexts labelling the axioms, and the variables of the BN from a \BALC KB both come from the same set of
discrete random variables $V$\!.

To define the semantics of \BALC, we extend the notion of an interpretation from \ALC to also take contexts into account.
The probabilities expressed by the BN are associated to the axioms following the multiple-world approach.

\begin{definition}[$V$-interpretation]
Let $V$ be a finite set of discrete random variables.
A \emph{$V$-interpretation} is a tuple of the form $\vModelDL = (\vDomainDL, \vInterpFuncDL, \valFuncDL)$ where 
$(\Delta^\Vmc,\cdot^\Vmc)$ is an \ALC interpretation and
$\valFuncDL$ is a \emph{valuation function} $\valFuncDL : V \to \cup_{X \in V} val(X)$ which maps every $X\in V$ to some
$\valFuncDL(X)\in val(X)$.
\end{definition}
Given a valuation function $\valFuncDL$, a Bayesian world $\omega$, and a context $\kappa$ we
use the notation $\valFuncDL = \omega$ to express that $\valFuncDL$ assigns to each random variable the same value as it 
has in $\omega$; $\valFuncDL \models \kappa$ to state that $\valFuncDL(X) = x$ for all $(X, x) \in \kappa$; 
and $\omega \models \kappa$ when there is $\omega = \valFuncDL$ such that $\valFuncDL \models \kappa$.

\begin{definition}[Model]
\label{def:model}
The $V$-interpretation $\vModelDL=(\vDomainDL, \vInterpFuncDL, \valFuncDL)$ is a \emph{model} of the $V$-axiom $\alpha^\kappa$ 
(denoted as $\vModelDL \models \alpha^\kappa$) iff (i) $\valFuncDL \not \models \kappa$, or 
(ii) $(\Delta^\Vmc,\cdot^\Vmc)$ satisfies $\alpha$. 
\Vmc is a model of the ontology $\Omc$ iff it is a model of all axioms in \Omc. In this case, we denote it by $\Vmc\models\Omc$.
\end{definition}
It is important to notice at this point that \BALC is a generalisation of the classical DL \ALC. Recall that a $V$-context is a
set of $V$-literals; that is, of pairs $(X,x)$. In particular, the empty set of literals $\emptyset$ is also a $V$-context.
By definition, every valuation function $\valFuncDL$ is such that $\valFuncDL\models \emptyset$. This means that the
\ALC axiom $\alpha$ is equivalent to the $V$-axiom $\alpha^\emptyset$. As a whole every \ALC ontology can be seen
as a $V$-ontology by simply associating the empty context to all its axioms. The original \ALC ontology and the $V$-ontology
constructed this way have the same class of models, except that the latter allows for any arbitrary valuation function in its third
component.
For brevity, and the reasons just described, for the rest of this paper we will abbreviate axioms of the
form $\alpha^\emptyset$ simply as $\alpha$. 
When it is clear from the context, we will also omit the $V$ prefix and refer only to e.g.\ contexts, GCIs, or ontologies.

A $V$-interpretation---more precisely, its valuation function---refers to only one possible world of the random variables in $V$.
\BALC KBs, on the other hand, have information about the uncertainty of being in
one world or another. In the multiple-world semantics, probabilistic interpretations combine multiple $V$-interpretations and 
the probability distribution described by the BN to give information about the uncertainty of the axioms, and their consequences.

\begin{definition}[Probabilistic model]
A \emph{probabilistic interpretation} is a pair of the form $\mathcal{P = (J, P_J)}$, where $\mathcal{J}$ is a finite set of 
$V$-interpretations and $\mathcal{P_J}$ is a probability distribution over $\mathcal{J}$ such that 
$\mathcal{P_J(\vModelDL)} > 0$ for all $\mathcal{\vModelDL \in J}$.
The probabilistic interpretation $\probModelDL$ is a \emph{model} of the axiom
$\alpha^\kappa$ (denoted by $\probModelDL \models \alpha^{\kappa}$) iff every $\mathcal{\vModelDL \in J}$ is a model 
of $\alpha^{\kappa}$. $\probModelDL$ is a \emph{model} of the ontology \Omc iff 
every $\mathcal{\vModelDL \in J}$ is a model of $\Omc$.

We say that the probability distribution $\mathcal{P_J}$ is \emph{consistent} with the BN $\bnDL$ if for every possible world 
$\omega$ of the variables in $V$ it holds that 
$$\sum_{\mathcal{\vModelDL \in J}, \valFuncDL=\omega} P_{\Jmc}(\vModelDL) = P_\Bmc(\omega)$$ \modify{Recall that} 
$P_\Bmc$ denotes the joint probability distribution defined by the BN \Bmc.
The probabilistic interpretation $\probModelDL$ is a \emph{model} of the KB 
$\kbDL = (\Omc, \bnDL)$ iff it is a (probabilistic) model of \Omc, and is consistent with $\bnDL$.
\end{definition}
In other words, a probabilistic interpretation describes a class of possible worlds, in which the knowledge of \Omc is interpreted,
and each world is associated with a probability. For this interpretation to be a model of the KB, it has to satisfy the constraints
it requires; that is, the ontological knowledge must be satisfied in each world, and the probabilities should be coherent
with the distribution from the BN.

\begin{example}
\label{exa:ont}
Consider the KB $\Kmc=(\Omc,\Bmc)$ where \Bmc is the BN depicted in Figure~\ref{fig:BN}, and \Omc
is the ontology
\begin{align*}
\Omc := \{ & &
	\text{Pipe}\sqcap \exists\text{contains}.\text{Lead} \sqsubseteq \text{LeadPipe}^\emptyset, \\ & 
	\text{Pipe}\sqsubseteq \forall \text{contains}.\neg\text{Lead}^X, &
	\text{Pipe}\sqsubseteq \forall \text{contains}.\neg\text{Lead}^Y, \\ &
	\text{Pipe}\sqsubseteq \exists \text{contains}.\text{Lead}^Z, &
	\text{Water}\sqcap \exists \text{hasAlkalinity}.\text{Low} \sqsubseteq \neg \text{Drinkable}^Z, \\ &
	\text{Water} \sqsubseteq \text{Drinkable}^{W}, &
	\text{Water} \sqsubseteq \neg \text{Drinkable}^{\neg W} &
\}.
\end{align*}
The first axiom provides a partial definition, stating that every pipe which contains lead is a lead pipe. Note that
it is labelled with the empty context, which means that it always holds.
The two axioms in the second row express that pipes in post-1986 (that is, within the context $X$) and in renovated buildings 
(in context $Y$) do not contain lead. 

The axioms in the third row refer exclusively to the context of lead pipes ($Z$). In this
case, our knowledge is that pipes do contain lead, and that water with low alkalinity is not drinkable, as it 
absorbs the lead from the pipes it travels on. Notice that the axioms in the second row contradict 
$\text{Pipe}\sqsubseteq\exists\text{contains}.\text{Lead}$ appearing in the third row. This is
not a problem because they are required to hold in different contexts. Indeed, we can observe from Figure~\ref{fig:BN}
that any context which makes $Z$, and either $X$ or $Y$ true must have probability 0. This means that no probabilistic model
can contain a world that satisfies these conditions.

Finally, the last two axioms state that water at a building is drinkable iff we are in the context of drinkable water ($W$).
Note that it is important to provide both axioms in order to guarantee an equivalence. Suppose that we had not included the
last axiom (i.e., $\text{Water}\sqsubseteq\neg\text{Drinkable}^{\neg W}$). Then, by our semantics, we could still produce a 
$V$\mbox{-}interpretation \Vmc whose valuation satisfies $\neg W$ but such that 
$\text{Water}^\Vmc\cap\text{Drinkable}^\Vmc\not=\emptyset$. This can only be avoided by the introduction of the last axiom.

It is a simple exercise to verify that this ontology has a model. The smallest probabilistic model of the KB \Kmc contains 10
$V$-interpretations; one for each world with positive probability.
\end{example}
Recall that $V$-contexts are also called primitive contexts. This is to emphasise the difference with what we will,
from now on, call complex contexts.
Formally, a \emph{complex context} $\phi$ is a finite non-empty set of primitive contexts. Note that primitive contexts
can be seen as complex ones; 
e.g., the primitive context $\kappa$ corresponds to the complex context $\{ \kappa \}$, and we will use them
interchangeably when there is no risk of ambiguity.

Given a valuation function $\valFuncDL$ and a complex context $\phi = \{ \alpha_1 , \dots , \alpha_n \}$ we say 
that $\valFuncDL$ satisfies $\phi$ (written as $\valFuncDL \models \phi$) iff $\valFuncDL$ satisfies \emph{at least one}
$\alpha_i \in \phi$; in particular, $\valFuncDL \models \kappa$ iff $\valFuncDL \models \{\kappa\}$. Thus, for the rest of the paper 
we assume that all contexts are in complex form unless 
explicitly stated otherwise. 
We emphasise here that for a valuation function to satisfy a complex context, it suffices that it satisfies one of the
simple contexts in it. Intuitively, a simple context is a conjunction of valuations, while a complex context is a disjunction
of simple contexts. However, one should not forget that the variables appearing in these contexts can potentially take
more than two different \modify{values}.

\begin{definition}
Given complex contexts $\phi = \{ \kappa_1, \dots , \kappa_n \}$ and $\psi = \{ \lambda_1, \dots, \lambda_m \}$ we define the operations
\begin{align*}
\phi \lor \psi & {} := \phi \cup \psi, & \text{and} \\
\phi \land \psi & {} := \bigcup_{\kappa \in \phi, \lambda \in \psi} \{ \kappa \cup \lambda \} = 
	\{ \kappa \cup \lambda \mid \kappa \in \phi, \lambda \in \psi \}.
\end{align*}
\end{definition}
These operations generalise propositional disjunction ($\lor$) and propositional conjunction 
($\land$), where disjunction has the property that either one of the two contexts holds and conjunction requires 
that both hold. 

Given two complex contexts $\phi$ and $\psi$, we say that $\phi$ entails $\psi$ (denoted by $\phi \models \psi$) 
iff for all $\valFuncDL$ such that 
$\valFuncDL \models \phi$ it follows that
$\valFuncDL \models \psi$. Alternatively $\phi \models \psi$ holds iff for all Bayesian worlds $\omega$ such that 
$\omega \models \phi$ it follows that $\omega \models \psi$.
It is easy to see that for all worlds $\omega$ and complex contexts $\phi,\psi$ it holds that
(i) $\omega \models \phi \lor \psi$  iff $\omega \models \phi$ or $\omega \models \psi $,
and (ii)
$\omega \models \phi \land \psi$ iff $\omega \models \phi$ and $\omega \models \psi$.
Two important special complex contexts are top ($\top$) and bottom ($\perp$), which are satisfied by 
all or no world, respectively.
If there are $n$ consistent primitive contexts $\kappa_1,\ldots,\kappa_n$ and $\kappa$ is an inconsistent context, 
these are defined as 
$\top:=\{ \kappa_1, \dots , \kappa_n \}$ and $\perp:=\kappa$.

This concludes the definition of the relevant components of the logic \BALC. 
In the next section, we study the problem of consistency of a \BALC KB, and its relation to other reasoning
problems.

\section{Consistency}
\label{sec:consistency}

As for \ALC, the most basic decision problem one can consider in \BALC is \emph{consistency}. Formally this
problem consists of deciding whether a 
given \BALC KB \Kmc has a probabilistic model or not. To deal with this problem it is convenient to consider
the classical \ALC ontologies that should hold at each specific world, which we call the restriction. 

\begin{definition}[restriction]
Given the \BALC KB
$\Kmc=(\Omc,\Bmc)$ and the world $\omega$, the \emph{restriction} of \Omc to $\omega$ is the \ALC ontology
defined by
\[
\Omc_\omega := \{ \alpha \mid \alpha^\kappa \in \Omc, \omega\models \kappa \}.
\]
\end{definition}
Recall that a probabilistic model $\Pmc=(\Jmc,\Pmc_\Jmc)$ of $\Kmc=(\Omc,\Bmc)$ is a class of classical interpretations 
associated to worlds $(\Delta^\Vmc,\cdot^\Vmc,\omega)$, where each is a model of the $V$-ontology \Omc. For such
an interpretation to be a model of \Omc, it must hold that, for each axiom $\alpha^\kappa\in\Omc$ such that $\omega\models\kappa$,
$(\Delta^\Vmc,\cdot^\Vmc)\models\alpha$ (see Definition~\ref{def:model}). This condition is equivalent to stating that
$(\Delta^\Vmc,\cdot^\Vmc)$ is a classical model of $\Omc_\omega$. 
\begin{example}
Consider the KB \Kmc from Example~\ref{exa:ont}, and the two worlds $\omega_1=\{X,\neg Y,\neg Z,W\}$ and
\modify{$\omega_2=\{X,\neg Y,Z,W\}$}. We then have
\begin{align*}
\Omc_{\omega_1} := \{ &
			\text{Pipe}\sqcap \exists\text{contains}.\text{Lead} \sqsubseteq \text{LeadPipe}, 
			\text{Pipe}\sqsubseteq \forall \text{contains}.\neg\text{Lead}, \\ & 
			\text{Water} \sqsubseteq \text{Drinkable}\}, & \text{and} \\
\Omc_{\omega_2} := \{ &
			\text{Pipe}\sqcap \exists\text{contains}.\text{Lead} \sqsubseteq \text{LeadPipe}, 
			\text{Pipe}\sqsubseteq \forall \text{contains}.\neg\text{Lead}, \\ & 
			\text{Pipe}\sqsubseteq \exists \text{contains}.\text{Lead},
			\text{Water}\sqcap \exists \text{hasAlkalinity}.\text{Low} \sqsubseteq \neg \text{Drinkable}, \\ &
			\text{Water} \sqsubseteq \text{Drinkable}\}.
\end{align*}
Note that both ontologies are consistent, but $\Omc_{\omega_2}$ can only be satisfied by interpretations \Imc such that
$\text{Pipe}^\Imc=\emptyset$.
\end{example}
In addition to satisfying the restrictions, the probability distribution $\Pmc_\Jmc$ of a model must be consistent with the BN \Bmc. 
This means that the 
probabilities of the interpretations associated with the world $\omega$ must add to $P_\Bmc(\omega)$. 
Using this insight, we obtain the following result.
\begin{theorem} 
\label{thm:cons}
The \BALC KB $\Kmc=(\Omc,\Bmc)$ is consistent iff for every world $\omega$ with $P_\Bmc(\omega)>0$ it holds that 
$\Omc_\omega$ is consistent.
\end{theorem}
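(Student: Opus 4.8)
The plan is to prove both directions of the biconditional separately, working directly from the definitions of a probabilistic model and of consistency with the BN.

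\medskip

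\noindent\textbf{($\Rightarrow$) Consistency implies each positive-probability restriction is consistent.}
First I would assume that $\Kmc$ has a probabilistic model $\Pmc=(\Jmc,\Pmc_\Jmc)$ and fix an arbitrary world $\omega$ with $P_\Bmc(\omega)>0$. Because $\Pmc_\Jmc$ is consistent with \Bmc, we have
\[
\sum_{\vModelDL\in\Jmc,\ \valFuncDL=\omega} P_\Jmc(\vModelDL)=P_\Bmc(\omega)>0,
\]
so this sum is nonzero, which forces at least one $V$-interpretation $\vModelDL\in\Jmc$ with $\valFuncDL=\omega$ to actually occur in \Jmc (an empty sum would be $0$). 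Fix such a \Vmc. Since \Pmc is a model of \Omc, every member of \Jmc, and in particular this \Vmc, is a model of \Omc. By the discussion immediately preceding the theorem, being a model of \Omc means that for every axiom $\alpha^\kappa\in\Omc$ with $\valFuncDL\models\kappa$ the classical part $(\Delta^\Vmc,\cdot^\Vmc)$ satisfies $\alpha$; since $\valFuncDL=\omega$, the literals satisfied by $\valFuncDL$ are exactly those satisfied by $\omega$, so this says precisely that $(\Delta^\Vmc,\cdot^\Vmc)\models\Omc_\omega$. Hence $\Omc_\omega$ has a classical \ALC model and is consistent.

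\medskip

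\noindent\textbf{($\Leftarrow$) Consistency of all positive-probability restrictions implies \Kmc is consistent.}
Here I would construct a probabilistic model explicitly. Let $W^+=\{\omega\mid P_\Bmc(\omega)>0\}$ be the set of worlds with positive probability. By hypothesis, for each $\omega\in W^+$ the ontology $\Omc_\omega$ is \ALC-consistent, so I can pick a classical model $\Imc_\omega=(\Delta^{\Imc_\omega},\cdot^{\Imc_\omega})$ of $\Omc_\omega$. From each I build a $V$-interpretation $\vModelDL_\omega=(\Delta^{\Imc_\omega},\cdot^{\Imc_\omega},\valFuncDL_\omega)$ by setting the valuation function $\valFuncDL_\omega$ to agree with $\omega$, i.e.\ $\valFuncDL_\omega=\omega$. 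I then set $\Jmc=\{\vModelDL_\omega\mid\omega\in W^+\}$ and define $\Pmc_\Jmc(\vModelDL_\omega)=P_\Bmc(\omega)$. I must verify three things: that each $\vModelDL_\omega$ is a model of \Omc, that $\Pmc_\Jmc$ is a genuine probability distribution with strictly positive values on \Jmc, and that $\Pmc_\Jmc$ is consistent with \Bmc. The positivity and normalisation of $\Pmc_\Jmc$ follow because the $P_\Bmc(\omega)$ for $\omega\in W^+$ are exactly the positive entries of a distribution and hence sum to $1$; the BN-consistency equation holds because for each world $\omega$ the only $V$-interpretation in \Jmc with valuation $\omega$ is $\vModelDL_\omega$ itself (worlds outside $W^+$ contribute an empty sum equal to their probability $0$). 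For the model condition I use the two-case definition of $\vModelDL\models\alpha^\kappa$: for any axiom $\alpha^\kappa\in\Omc$, either $\valFuncDL_\omega\not\models\kappa$, in which case (i) holds trivially, or $\valFuncDL_\omega\models\kappa$, in which case $\omega\models\kappa$, so $\alpha\in\Omc_\omega$ and $\Imc_\omega\models\alpha$ gives (ii).

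\medskip

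\noindent The step I expect to be the main subtlety is the correct handling of the correspondence between valuation functions and worlds, and of the empty-sum cases in the BN-consistency condition — specifically, making sure that worlds with probability $0$ do not need representatives in \Jmc (their required sum is $0$, matched by the empty sum) while every positive-probability world must have exactly one, so that the summation constraint $\sum_{\valFuncDL=\omega}P_\Jmc(\vModelDL)=P_\Bmc(\omega)$ holds for \emph{all} $\omega$ and not merely those in $W^+$. Everything else reduces to unwinding the definitions of model and of restriction, which is routine.
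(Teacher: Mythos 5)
Your proposal is correct and follows essentially the same route as the paper's own proof: the forward direction extracts, via the BN-consistency equation and positivity of $P_\Bmc(\omega)$, a $V$-interpretation in $\Jmc$ whose valuation matches $\omega$ and whose classical part must model $\Omc_\omega$, while the backward direction builds exactly the paper's canonical model from one classical model per positive-probability world, with $\Pmc_\Jmc(\Vmc_\omega)=P_\Bmc(\omega)$. Your extra care about the empty-sum cases for zero-probability worlds and the normalisation of $\Pmc_\Jmc$ makes explicit details the paper leaves implicit, but it is the same argument.
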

\begin{proof}
Suppose first that for every world $\omega$ such that $P_\Bmc(\omega)>0$ the ontology $\Omc_\omega$ is consistent.
We build a probabilistic model \Pmc of \Kmc as follows. Given a world $\omega$ with $P_\Bmc(\omega)>0$, let 
$\Imc_\omega=(\Delta^{\Imc_\omega},\cdot^{\Imc_\omega})$ be an arbitrary (classical) model of $\Omc_\omega$. We construct
the $V$\mbox{-}interpretation $\Vmc_\omega=(\Delta^{\Imc_\omega},\cdot^{\Imc_\omega},\omega)$; that is, the same interpretation
$\Imc_\omega$, but associated to the world $\omega$. We claim that $\Vmc_\omega\models\Omc$. Indeed, 
take an arbitrary $\alpha^\kappa\in\Omc$. If $\omega\not\models\kappa$, then $\Vmc_\omega\models \alpha^\kappa$ trivially.
Otherwise, $\alpha\in\Omc_\omega$ and hence (since $\Imc_\omega\models\Omc_\omega$) 
$(\Delta^{\Imc_\omega},\cdot^{\Imc_\omega})\models\alpha$. Let now $\Omega$ be the set of all worlds $\omega$ such that
$P_\Bmc(\omega)>0$. We define the set of $V$-interpretations $\Jmc_\Omega:=\{\Vmc_\omega\mid\omega\in\Omega\}$, and
the probability distribution $\Pmc_{\Jmc_\Omega}$ that sets $\Pmc_{\Jmc_\Omega}(\Vmc_\omega)=P_\Bmc(\omega)$ for every 
$\omega\in\Omega$.
The probabilistic interpretation $\Pmc=(\Jmc_\Omega,\Pmc_{\Jmc_\Omega})$ is a model of \Kmc.

Conversely, suppose that \Kmc is consistent, and let $\Pmc=(\Jmc,\Pmc_\Jmc)$ be a model of \Kmc. Given a world $\omega$
such that $P_\Bmc(\omega)>0$, we know that 
$\sum_{\mathcal{\vModelDL \in J}, \valFuncDL=\omega} P_{\Jmc}(\vModelDL) = P_\Bmc(\omega)$. In particular this means
that there must exist some $\Vmc=(\Delta^\Vmc,\cdot^\Vmc,\valFuncDL)\in\Jmc$ such that $\valFuncDL=\omega$. 
$\Pmc\models\Kmc$ means that $\Vmc\models\Omc$ and hence $(\Delta^\Vmc,\cdot^\Vmc)\models\Omc_\omega$. Thus,
$\Omc_\omega$ is consistent.
\end{proof}
Based on this result, we can derive a process for deciding consistency that provides a tight complexity bound
for this problem.
\begin{corollary}
\label{cor:et}
\BALC KB consistency is \ExpTime-complete.
\end{corollary}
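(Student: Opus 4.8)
The plan is to establish both the upper and lower complexity bounds for \BALC KB consistency, relying on Theorem~\ref{thm:cons} for the former and on the \ExpTime-hardness of classical \ALC for the latter.

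For the \textbf{upper bound}, I would invoke Theorem~\ref{thm:cons}, which reduces consistency of $\Kmc=(\Omc,\Bmc)$ to checking that $\Omc_\omega$ is consistent for every world $\omega$ with $P_\Bmc(\omega)>0$. The naive approach of iterating over all worlds is problematic, since there are exponentially many worlds in $|V|$, and each classical \ALC consistency check is itself \ExpTime. Composing these bounds would only yield a doubly-exponential procedure. The key observation I would exploit is that the number of \emph{distinct} restrictions $\Omc_\omega$ is bounded: each $\Omc_\omega$ is a subset of the (polynomially many) axioms appearing in \Omc, stripped of their contexts, so there are at most $2^{|\Omc|}$ distinct restrictions. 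More usefully, one can avoid enumerating worlds altogether by working with the axioms directly. The plan is to argue that deciding inconsistency amounts to finding a single world $\omega$ of positive probability whose restriction is inconsistent; a minimal inconsistent subset of context-stripped axioms corresponds to a constraint on the contexts, and checking positive probability of a compatible world is polynomial in the BN. Since classical \ALC consistency of each candidate restriction runs in \ExpTime, and the bookkeeping over which subsets of axioms are jointly activable by some positive-probability world can be folded into the same exponential bound, the overall procedure stays within \ExpTime.

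For the \textbf{lower bound}, the argument is short. Every classical \ALC ontology \Omc can be viewed as a \BALC KB by annotating each axiom with the empty context $\emptyset$ and pairing it with any trivial BN \Bmc over an empty (or singleton) variable set $V$, where some world has positive probability. By the discussion following Definition~\ref{def:model}, the axiom $\alpha^\emptyset$ is equivalent to the classical axiom $\alpha$, so $\Omc_\omega=\Omc$ for the positive-probability world $\omega$. By Theorem~\ref{thm:cons}, this \BALC KB is consistent iff \Omc is classically consistent. Since this reduction is polynomial and \ALC consistency is \ExpTime-hard, \BALC KB consistency inherits \ExpTime-hardness.

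The \textbf{main obstacle} I anticipate is making the upper bound argument genuinely single-exponential rather than double-exponential. The difficulty lies in the interplay between the exponentially many worlds and the \ExpTime cost per classical check. The cleanest resolution is probably to reason at the level of the context annotations: rather than testing each world in isolation, one identifies which finite combinations of axioms can be simultaneously triggered (captured by the satisfiability of an associated complex context against the BN), and runs the \ALC decision procedure on the polynomially-sized restriction once per relevant combination. Ensuring that the set of relevant combinations, together with the positive-probability test over worlds in the BN, is examined within a single exponential budget is where the care is required; everything else follows routinely from Theorem~\ref{thm:cons} and the known complexity of \ALC.
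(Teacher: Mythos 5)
There is a genuine gap, and it sits at the heart of your upper-bound argument: your claim that the naive enumeration over worlds is doubly exponential is a miscomputation of how the bounds compose. There are at most $2^{|V|}$ worlds, and each restriction $\Omc_\omega$ is a subset of the axioms of $\Omc$, hence of size polynomial in the input; so each classical \ALC consistency check costs $2^{p(n)}$ for a fixed polynomial $p$ in the input size $n$, and for each world the test $P_\Bmc(\omega)>0$ is linear by the chain rule (since $\omega$ is a \emph{complete} assignment). The total running time is therefore $2^{|V|}\cdot\bigl(2^{p(n)}+O(n)\bigr)\le 2^{q(n)}$, i.e.\ single exponential: exponentially many calls, each single-exponential on a polynomial-size input, compose to a single exponential. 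Double exponential blow-up would require the per-call inputs themselves to grow exponentially, which does not happen here. This "naive" procedure is exactly the paper's proof of the upper bound, so the elaborate substitute you build to avoid it is solving a problem that does not exist. (Your lower bound---embedding a classical \ALC ontology via empty contexts and a trivial BN---is correct and is the same argument as the paper's, which phrases it as \BALC being a generalisation of \ALC.)

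Moreover, the replacement machinery you sketch contains its own error. You assert that, given a constraint on contexts arising from a minimal inconsistent subset of context-stripped axioms, "checking positive probability of a compatible world is polynomial in the BN." It is not: deciding whether a \emph{partial} assignment (evidence) has positive probability in a Bayesian network is NP-hard in general, since deterministic entries in the conditional probability tables can encode an arbitrary propositional satisfiability instance. The chain rule yields a linear-time evaluation only for complete worlds, which is precisely why the paper enumerates full worlds rather than reasoning over partial contexts. Your route could in principle be patched (NP-hard subroutines still fit inside an \ExpTime budget, and the bookkeeping over the at most $2^{|\Omc|}$ candidate restrictions is likewise affordable), but as written the argument rests on a false premise about the naive algorithm and on a false polynomiality claim, and it never actually pins down the "care" it acknowledges is needed.
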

\begin{proof}
Recall that consistency checking in \ALC is \ExpTime-complete, and that \BALC is a generalisation of \ALC. This
yields \ExpTime-hardness for \BALC consistency. To obtain the upper bound, note that
there are exponentially many worlds $\omega$. For each of them, we can check (classical) consistency
of $\Omc_\omega$ (in exponential time) and that $P_\Bmc(\omega)>0$, which is linear in the size of \Bmc by the chain rule.
\end{proof}
The algorithm described in the proof of this corollary is optimal in terms of \emph{worst-case} complexity, but
it also runs in exponential time in the \emph{best case}. Indeed, it enumerates all the (exponentially many)
Bayesian worlds, before being able to make any decision. In practice, it is infeasible to use an algorithm that requires 
exponential time on every instance.
For that reason, we present a new algorithm based on the tableau method originally developed for \ALC. To 
describe this algorithm, we need to introduce some additional notation.

We denote the context that describes all worlds $\omega$ such that $\Omc_\omega$ is inconsistent as 
$\phi_{\kbDL}^{\bot}$. That is, the context $\phi_\Kmc^\bot$ is such that 
$\omega \models \phi_{\kbDL}^\bot$ iff $\Omc_\omega$ is inconsistent.
Moreover, $\phi_{\bnDL}$ is a context such that $\omega \models \phi_{\bnDL}$ iff $P(\omega)=0$.
Theorem~\ref{thm:cons} states that \Kmc is inconsistent whenever there is a world that models 
both $\phi_{\kbDL}^\bot$ and $\neg\phi_{\bnDL}$. This is formalized in the following result.

\begin{theorem}
\label{thm:formula}
Given the \BALC KB $\Kmc=(\Omc,\Bmc)$, let $\phi_\Kmc^\bot$ and $\phi_\Bmc$ be the contexts described above.
Then \Kmc is inconsistent iff $\phi_\Kmc^\bot \land \neg\phi_\Bmc$ is satisfiable. 
\end{theorem}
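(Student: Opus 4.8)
The plan is to derive the claimed equivalence directly from Theorem~\ref{thm:cons} by a short chain of biconditionals, translating its consistency criterion into the language of context satisfaction.

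First I would rewrite Theorem~\ref{thm:cons} in contrapositive form: \Kmc is inconsistent iff there exists a world $\omega$ with $P_\Bmc(\omega)>0$ such that $\Omc_\omega$ is inconsistent. This turns the universally quantified consistency condition into an existential ``bad world'' condition, which is exactly the shape of a satisfiability statement.

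Next I would unfold the two conditions on this bad world using the defining properties of $\phi_\Kmc^\bot$ and $\phi_\Bmc$. By definition, ``$\Omc_\omega$ is inconsistent'' is equivalent to $\omega\models\phi_\Kmc^\bot$. For the probability condition, since $\omega\models\phi_\Bmc$ holds iff $P_\Bmc(\omega)=0$, and probabilities are non-negative, ``$P_\Bmc(\omega)>0$'' is the same as ``$\omega\not\models\phi_\Bmc$'', i.e.\ $\omega\models\neg\phi_\Bmc$. Applying property~(ii), which states that $\omega\models\phi\land\psi$ iff $\omega\models\phi$ and $\omega\models\psi$, the existence of a bad world becomes: there is a world $\omega$ with $\omega\models\phi_\Kmc^\bot\land\neg\phi_\Bmc$. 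Since a complex context is satisfiable precisely when some world satisfies it, this is exactly the statement that $\phi_\Kmc^\bot\land\neg\phi_\Bmc$ is satisfiable, completing the chain.

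I do not expect a genuine mathematical obstacle; the work is almost entirely definitional bookkeeping. The one point that needs care is the reading of $\neg\phi_\Bmc$: unlike $\land$ and $\lor$, negation of a complex context was not given an explicit constructive definition, so I would state up front that it is to be understood semantically, namely $\omega\models\neg\phi$ iff $\omega\not\models\phi$. With that convention fixed, every step above is an ``iff'', so the forward and backward directions are handled simultaneously and no separate case analysis is required.
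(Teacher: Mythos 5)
Your proof is correct and is essentially identical to the paper's: the paper also proves the statement as a single chain of biconditionals starting from Theorem~\ref{thm:cons}, unfolding the defining properties of $\phi_\Kmc^\bot$ and $\phi_\Bmc$, and ending with satisfiability of $\neg\phi_\Bmc\land\phi_\Kmc^\bot$. Your explicit remark that $\neg\phi_\Bmc$ must be read semantically ($\omega\models\neg\phi$ iff $\omega\not\models\phi$) is a small point of care that the paper leaves implicit, but it does not change the argument.
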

\begin{proof}
\Kmc is inconsistent iff there is a world $\omega$ such that $P_\Bmc(\omega)>0$ and $\Omc_\omega$ is inconsistent
(Theorem~\ref{thm:cons}) iff $\omega\not\models\phi_\Bmc$ and $\omega\models\phi_\Kmc^\bot$ iff 
$\omega\models\neg\phi_\Bmc\land\phi_\Kmc^\bot$ iff $\neg\phi_\Bmc\land\phi_\Kmc^\bot$ is satisfiable.
\end{proof}
\begin{example}
Consider once again the KB $\Kmc=(\Omc,\Bmc)$ from Example~\ref{exa:ont}, and notice that $\Omc_\omega$ is consistent for all
worlds $\omega$. Hence $\phi_\Kmc^\bot=\bot$. In particular, this means that $\phi_\Kmc^\bot\land\neg\phi_\Bmc$ is
unsatisfiable, and hence \Kmc is consistent.

Let now $\Kmc'=(\Omc',\Bmc)$, where $\Omc'=\Omc\cup\{\text{Pipe}(\text{pipe1})^\emptyset\}$. 
We can see that for any world $\omega$ which entails any of the contexts $X,Z$ or $Y,Z$, $\Omc'_\omega$ is inconsistent.
Hence we have that $\phi_{\Kmc'}^\bot=\{\{X,Z\},\{Y,Z\}\}$. Looking at the probability tables from \Bmc, we also see that
$\phi_\Bmc=\{\{X,Z\},\{Y,Z\}\}$; that is, any world that satisfies $\{X,Z\}$ or $\{Y,Z\}$ must have probability 0. Thus,
\modify{$\phi_{\Kmc'}^\bot\land\lnot\phi_\Bmc=\bot$} and $\Kmc'$ is consistent as well.
\end{example}
To decide consistency, it then suffices to find a method which is capable of deriving the contexts $\phi_{\kbDL}^\bot$ and 
$\phi_{\bnDL}$. For the former, we present a variant of the glass-box approach for so-called axiom 
pinpointing~\cite{LMPB06,MLBP06,BaPe-JLC09}, which is originally based on the ideas from~\cite{BaHo-95}.
This approach modifies the standard tableaux algorithm for $\ALC$---which tries to construct a model by decomposing
complex constraints into simpler ones---to additionally keep track of the
contexts in which the derived elements from the tableau hold. In a nutshell, tableau algorithms apply different decomposition
rules to make implicit constraints explicit. For axiom pinpointing, whenever a rule application requires
the use of an axiom from the ontology, this fact is registered as part of a propositional formula. In our case,
we need a context, rather than a propositional formula, to take care of the multiple values that the random
variables take. Although the underlying idea remains essentially unchanged, the handling of these multiple values
requires additional technicalities. Before we look at this algorithm in detail, it is important to keep in mind that the
goal is to detect all the worlds whose restriction is an inconsistent \ALC ontology; hence, we are interested in detecting
and highlighting contradictions.

For this algorithm, it will be important to distinguish the terminological portion of the ontology (the TBox) from the 
assertional knowledge (the ABox). From a very high-level view, the algorithm starts by creating a partial model that
satisfies all the constraints from the ABox, and then uses rules to decompose complex concepts until an actual
valuation (based exclusively on concept names and role names) is reached. Throughout the process, rules are 
also used to guarantee that all the constraints from the TBox are satisfied by the individuals in this model.

The algorithm starts with the ABox \Amc from \Omc. If \Omc contains no assertions, we assume implicitly that the ABox is
$\{\top(a)\}$ for an arbitrary individual name $a$. Recall that all the axioms in \Amc are labeled with a context, and
we preserve the context at this initialisation step. From this point, the algorithm deals with a set of ABoxes \Amf, which are
created and modified following the rules from Figure~\ref{fig:rules}. 
\begin{figure}[tb]
\modify{
\begin{tabular}{@{}l r p{11cm}@{}}
\toprule
    $\sqcap$-rule & if & $\balcStmnt{(C_1 \sqcap C_2)(a)}{\phi} \in \mathcal{A}$, and
     either $\balcStmnt{C_1(a)}{\phi}$ or $\balcStmnt{C_2(a)}{\phi}$ is $\mathcal{A}$-insertable \\
    & then & $\aboxDL^\prime := (\aboxDL \oplus \balcStmnt{C_1(a)}{\phi})\oplus \balcStmnt{C_2(a)}{\phi}$ \\
    $\sqcup$-rule & if & $\balcStmnt{(C_1 \sqcup C_2)(x)}{\phi} \in \aboxDL$, and
     both $\balcStmnt{C_1(a)}{\phi}$ and $\balcStmnt{C_2(a)}{\phi}$ are $\mathcal{A}$-insertable \\
    & then & $\aboxDL^\prime := \aboxDL \oplus \balcStmnt{C_1(a)}{\phi}$, $\aboxDL^{\prime \prime} := \aboxDL \oplus \balcStmnt{C_2(a)}{\phi}$\\
    $\exists$-rule & if & $\balcStmnt{(\exists R.C)(a)}{\phi} \in \aboxDL$,
     there is no $c$ such that neither $\balcStmnt{R(a,c)}{\phi}$ nor $\balcStmnt{C(c)}{\phi}$ is 
     $\mathcal{A}$\mbox{-}insertable, and $a$ is not blocked \\
    & then & $\aboxDL^\prime := (\aboxDL \oplus \balcStmnt{R(a,b)}{\phi}) \oplus \balcStmnt{C(b)}{\phi}$, where $b$ is a new individual name \\ 
    $\forall$-rule & if & $\{ \balcStmnt{(\forall R.C)(a)}{\phi}, \balcStmnt{R(a,b)}{\psi}\}\subseteq \aboxDL$, and
    $\balcStmnt{C(b)}{{\phi \wedge \psi}}$ is $\mathcal{A}$-insertable \\
    & then & $\aboxDL^\prime := \aboxDL \oplus \balcStmnt{C(b)}{{\phi \wedge \psi}}$\\
    $\sqsubseteq$-rule & if & $\balcStmnt{(C \sqsubseteq D)}{\phi} \in \Omc$, $a$ appears in $\aboxDL$, and
    $\balcStmnt{(\neg C \sqcup D)(a)}{{\phi}}$ is $\mathcal{A}$-insertable \\
    & then& $\aboxDL^\prime := \aboxDL \oplus \balcStmnt{(\neg C \sqcup D)(a)}{{\phi}}$ \\
\bottomrule
\end{tabular}
\caption{Expansion rules for constructing $\phi_{\kbDL}^\bot$}
}
\label{fig:rules}
\end{figure}
\modify{Throughout the execution of the algorithm, each assertion in the ABoxes is labelled with a context which, as usual, is
a propositional formula over the context variables; the algorithm modifies these labels accordingly to store all the relevant
information.}
As a pre-requisite for the execution of the algorithm, we assume that all concepts appearing in the ontology
are in \emph{negation normal form}~(NNF); that is, only concept names can appear in the scope of a negation 
operator. This assumption is made w.l.o.g.\ because every concept can be transformed into NNF in linear time by
applying the De Morgan laws, the duality of the quantifiers, and eliminating double negations.
Each rule application chooses an ABox $\Amc\in\Amf$ and replaces it by one or two new ABoxes that 
expand \Amc with new assertions. We explain the details of these rule applications next.

An assertion $\alpha^\phi$ is \emph{$\mathcal{A}$-insertable} 
iff for all $\psi$ 
such that $\alpha^{\psi} \in \mathcal{A}$, $\phi \not \models \psi$.
In the expansion rules, the symbol $\oplus$ is used as shorthand for the operation
\[
\aboxDL \oplus \alpha^{\phi} := 
	\begin{cases}
		(\aboxDL \setminus \{ \alpha^{\psi} \}) \cup \{ \alpha^{{\phi \vee \psi}} \} & \alpha^{\psi} \in \aboxDL \\
		\aboxDL \cup \{ \alpha^{\phi} \} & \text{otherwise}.
	\end{cases}
\]
\modify{The main idea of this operator is to update the contexts in which an assertion is required to hold. Either the
assertion is not already present in the ABox, and hence it is included with the context obtained by conjoining all the
contexts triggering the rule application or, if the assertion is already present, there are multiple ways to derive the same
assertion, whose contextual causes are all disjoint. Rather than keeping two different assertions in the ABox, we simply
update its label.}

Within an ABox \Amc, we say that
the individual \modify{$x$} is an \emph{ancestor} of \modify{$y$} if there is a chain of role assertions connecting \modify{$x$} to \modify{$y$}; more formally,
if there exist role names $r_1,\ldots,r_n$ and individuals \modify{$a_0,\ldots,a_n$} with $n\ge 1$ such that
\modify{$\{r_i(a_{i-1},a_i)\mid 1\le i\le n\}\subseteq\Amc$, $a_0=x$, and $a_n=y$}.
The node
\modify{$x$} \emph{blocks} \modify{$y$} iff \modify{$x$} is an ancestor of \modify{$y$} and for every \modify{$\balcStmnt{C(y)}{\psi} \in \aboxDL$}, there is
a $\phi$ such that \modify{$\balcStmnt{C(x)}{\phi} \in \aboxDL$} and $\psi \models \phi$; \modify{$y$} is \emph{blocked} if there
is a node that blocks it.

The algorithm applies the expansion rules from Figure~\ref{fig:rules} until \Amf is \emph{saturated}; i.e., 
until no rule is applicable to any
$\Amc\in\Amf$. 
We say that the ABox \Amc contains a \emph{clash} if \modify{$\{A(x)^\phi,\neg A(x)^\psi\}\subseteq\Amc$} for some individual \modify{$x$} and 
concept name $A$. Informally, a clash means that the ABox has an obvious contradiction, as it requires to interpret the element
\modify{$x$} to belong to the concept $A$ and its negation. We define the context 
\modify{
\[
\phi_\Amc:= \bigvee_{A(a)^\phi, \neg A(a)^\psi \in \Amc} \phi \land \psi,
\]
}%
which intuitively describes the contexts of all the clashes that appear in \Amc. When \Amf is saturated, we return the
context $\phi_\Kmc^\bot=\bigwedge_{\Amc\in\Amf} \phi_\Amc$ expressing the need of having clashes in every
ABox \Amc for inconsistency to follow. This is because each ABox in \Amf is a potential model, where the multiplicity
arises from the non-deterministic choice when decomposing disjunctions $\sqcup$. If all potential choices are obviously
contradictory, then no model can exist.
It is important for the correctness of the algorithm to notice that the definition of a clash does not impose any constraints 
on the contexts
$\phi$ and $\psi$ labelling the assertions \modify{$A(x)$ and $\neg A(x)$}, respectively. Indeed, \modify{$A(x)$ and 
$\neg A(x)$} could hold in contradictory contexts. In that case, the conjunction appearing in $\phi_\Amc$ would
not be affected; i.e., this clash will not provide any new information about inconsistency.
For example, if $\Amc=\{\text{WaterPipe}(\text{pipe1})^X,\neg\text{WaterPipe}(\text{pipe1})^{\neg X}\}$, then the 
context $\phi_\Amc$ is $X\land\neg X$. This context cannot be satisfied by any world, meaning that \Amc is not inconsistent.

Informally, the context $\phi_\Kmc^\bot$ corresponds to the \emph{clash formula} (also called pinpointing formula) for explaining
inconsistency of an \ALC ontology~\cite{LMPB06,BaPe-JLC09}. The main differences are that the variables 
appearing in
a context are not necessarily Boolean, but multi-valued, and that the axioms in \Omc are not labelled with 
unique variables, but rather with contexts. 
\modify{The correctness of the approach follows from the fact that each assertion $\alpha$ is labelled with a formula representing
the contexts in which $\alpha$ is necessarily true. Thus, every inconsistent context will be represented in $\phi_\Kmc^\bot$.
Conversely, saturation of the expansion rules guarantees that no other derivations of inconsistency as possible; otherwise, 
some rule would still be applicable.}

Notice that the expansion rules in 
Figure~\ref{fig:rules} generalise the expansion rules for \ALC, but may require new rule applications to guarantee
that all possible derivations of a clash are detected. 
\modify{Indeed, while the classical algorithm never applies a rule where the derived assertion already exists in the ABox, 
our modified algorithm may apply the rule if the label associated to the assertion does not already imply that which will be
included.}
As observed in~\cite{BaPe07,BaPe-JLC09}, one has to be
careful with claims of termination of the modified method, since the additional rule applications may in fact lead to
an infinite execution. \modify{Specifically, the modified algorithm obtained from a terminating tableau is not guaranteed to
terminate.} 
\modify{However, \cite{BaPe-JLC09,PenaPhD} devised some general conditions, which are sufficient to guarantee termination
of a method modified in this way. Importantly, those conditions are met whenever the original algorithm deals only with
unary and binary assertions, as is the case of the \ALC algorithm presented here. Hence, our approach is also guaranteed to
terminate. This yields the following result.}
\begin{theorem}
\label{thm:tab}
The modified tableau algorithm terminates, and the context $\phi_\Kmc^\bot$ is such that for every
world $\omega$, $\omega\models\phi_\Kmc^\bot$ iff $\Kmc_\omega$ is inconsistent.
\end{theorem}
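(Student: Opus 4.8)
The plan is to prove the two conjuncts of the statement separately, relating a run of the modified algorithm on $\Kmc=(\Omc,\Bmc)$ to runs of the classical \ALC tableau on the restrictions $\Omc_\omega$. For termination I would first bound the label-free \emph{skeletons} of the assertions that may ever occur: by the usual subconcept and blocking arguments for \ALC, only finitely many assertions $\alpha$ (over subconcepts of \Omc and over the individuals surviving blocking) can appear. It then remains to bound how often the label of a fixed skeleton can change. Each application of $\oplus$ that passes the $\mathcal{A}$-insertability test replaces a label $\psi$ by $\psi\lor\phi$ with $\phi\not\models\psi$, that is, it strictly enlarges the (finite) set of worlds satisfying the label; hence each skeleton is updated only finitely often. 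Combined with the finite branching of the $\sqcup$-rule this yields termination, matching the general sufficient conditions for unary and binary assertions from \cite{BaPe-JLC09,PenaPhD} already invoked above.

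For correctness, the central device is the \emph{projection} of a labelled ABox to a world: for $\Amc\in\Amf$ put $\Amc|_\omega:=\{\alpha\mid \alpha^\phi\in\Amc,\ \omega\models\phi\}$, a classical \ALC ABox. Using properties (i) and (ii) of context satisfaction, $\omega\models\phi_\Amc$ holds iff $\Amc|_\omega$ contains a classical clash, and therefore $\omega\models\phi_\Kmc^\bot=\bigwedge_{\Amc\in\Amf}\phi_\Amc$ iff \emph{every} projection $\Amc|_\omega$ clashes. The heart of the argument is then a \textbf{Projection Lemma} stating that, for each fixed $\omega$, the family $\{\Amc|_\omega\mid\Amc\in\Amf\}$ is exactly the set of saturated ABoxes reachable by the classical \ALC tableau started from $\Omc_\omega$. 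I would prove its two inclusions by induction on the run, checking that each rule of Figure~\ref{fig:rules} projects correctly: when the triggering context is satisfied by $\omega$ the step becomes the corresponding classical rule application (the use of $\land$ in the $\forall$-rule and in the clash formula is exactly what makes ``both premises present'' transfer to the projection, via property (ii)), and otherwise it leaves the projection unchanged. Two compatibility facts carry the technical weight. First, if $\psi\models\chi$ and $\omega\models\psi$ then $\omega\models\chi$; this shows that whenever $\mathcal{A}$-insertability blocks a modified step, the asserted fact is already present in $\Amc|_\omega$, so no classical step is missed. Second, the same implication shows that modified blocking of a node by another forces classical blocking in every projection, so the $\exists$-rule is never suppressed in the projection when the classical run needs it. A short separate check shows that saturation of \Amf transfers to classical saturation of each $\Amc|_\omega$: were a classical rule applicable to $\Amc|_\omega$, then by the first fact the corresponding modified rule would be $\mathcal{A}$-insertable and hence applicable to \Amc, contradicting saturation.

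With the Projection Lemma in hand the theorem follows from the soundness, completeness and termination of the classical \ALC tableau, under which $\Omc_\omega$ is inconsistent iff every reachable saturated ABox contains a clash. If $\Omc_\omega$ is inconsistent then every $\Amc|_\omega$, being such an ABox, clashes, so $\omega\models\phi_\Amc$ for all $\Amc\in\Amf$ and thus $\omega\models\phi_\Kmc^\bot$. Conversely, if $\omega\models\phi_\Kmc^\bot$ then every $\Amc|_\omega$ clashes, so there is no clash-free saturated classical branch, whence $\Omc_\omega$ is inconsistent. Since $\Kmc_\omega$ denotes the same restriction $\Omc_\omega$, this is precisely the claimed equivalence.

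I expect the main obstacle to be the completeness inclusion of the Projection Lemma: establishing that the contextual bookkeeping (the $\oplus$ operator, $\mathcal{A}$-insertability, and the context-aware blocking condition) never causes the modified run to omit a rule application that some classical branch on $\Omc_\omega$ requires, and that the $\sqcup$-branching is rich enough to realise every classical branch. The two entailment-to-world compatibility facts are what make this go through, and choosing an induction invariant strong enough to thread them through the $\forall$- and $\exists$-rules, including the interaction of blocking with freshly introduced individuals, is the delicate part.
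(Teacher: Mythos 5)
Your proposal is, in substance, the proof the paper intends: the paper gives no self-contained argument but states that the theorem follows ``analogously'' from the correctness and termination proofs for pinpointing extensions of tableau algorithms in \cite{BaPe-JLC09,PenaPhD}, and the central device of those proofs is precisely your projection of a labelled ABox to a valuation (here, a world $\omega$), combined with the finite-lattice observation that $\oplus$ under the $\Amc$-insertability test strictly enlarges the set of worlds satisfying a label (so each assertion is relabelled only finitely often), and the fact that the unary/binary shape of \ALC assertions makes the general sufficient termination conditions apply. So you have reconstructed rather than replaced the intended route.

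One caveat: your Projection Lemma is overstated, and the exact-equality version you propose to prove is false. The family $\{\Amc|_\omega \mid \Amc\in\Amf\}$ is in general \emph{not} exactly the set of saturated ABoxes reachable by the classical tableau from $\Omc_\omega$, for three reasons. First, rules fired under contexts not satisfied by $\omega$ duplicate branches without changing the projection. Second, the modified $\sqcup$-rule can fire even when one disjunct is already present in the projection, because $\Amc$-insertability checks entailment between labels, not membership in $\Amc|_\omega$: if $C_1(a)^\psi\in\Amc$ with $\phi\not\models\psi$ but $\omega\models\psi$, the rule still fires and one resulting projection contains \emph{both} disjuncts, an ABox no classical run produces. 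Third, modified blocking is stricter than classical blocking in a projection (label entailment can fail for other worlds even when $\omega$-membership transfers), so the $\exists$-rule may create individuals that are classically blocked in $\Amc|_\omega$, making projections strictly larger than any classical branch. What is true, and suffices for the theorem, is the weaker statement that is how the cited pinpointing proofs phrase it: every projection of a saturated $\Amc\in\Amf$ is a classically saturated, possibly over-complete, ABox containing the $\omega$-restricted initialisation, so a clash-free projection still yields a model of $\Omc_\omega$ by the standard construction (soundness); and a model-tracking induction over the run --- maintaining some $\Amc\in\Amf$ whose $\omega$-projection is satisfied by a fixed model of $\Omc_\omega$, choosing the appropriate branch at each $\sqcup$-step --- shows that if $\Omc_\omega$ is consistent then some final projection is clash-free (completeness). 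With your lemma weakened in this way, the rest of your argument, including the two entailment-to-world compatibility facts and the saturation-transfer check, goes through as you describe.
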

The proof of this theorem is analogous to the proofs of correctness and termination of pinpointing extensions for
tableau algorithms from \cite{BaPe-JLC09} only taking the differences between propositional formulas and contexts
into account.

We now turn our attention to the computation of the formula $\phi_\Bmc$. Recall that in a BN, the joint 
probability distribution is the product of the conditional probabilities of each variable given its parents. Hence
a world $\omega$ can only have probability 0 if it evaluates some variable in $X\in V$ and its parents
$\pi(X)$ to values $x$ and $\vec{x}$, respectively, such that $P(X=x\mid \pi(X)=\vec{x})=0$. Thus, to compute
$\phi_\Bmc$ it suffices to find out the cells in the conditional probability tables in $\Theta$ with value $0$.

\begin{theorem}
\label{thm:bn}
Let $\Bmc=(V,\Theta)$ be a BN, and define 
\modify{
\[
\phi_\Bmc:=\bigvee_{P(X=x\mid \pi(X)=\vec{x})=0} \left((X,x)\land\bigwedge_{Y\in\pi(X),Y=y}(Y,y)\right).
\]
}
Then for every world $\omega$, $\omega\models\phi_\Bmc$ iff $P_\Bmc(\omega)=0$.
\end{theorem}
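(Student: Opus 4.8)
The plan is to read off both directions directly from the chain rule, using the elementary fact that a product of nonnegative reals vanishes precisely when one of its factors does. Write $\omega(X)$ for the value that the world $\omega$ assigns to a variable $X$, and $\omega(\pi(X))$ for the induced assignment to its parents. By the chain rule, $P_\Bmc(\omega)=\prod_{X\in V}P(X=\omega(X)\mid\pi(X)=\omega(\pi(X)))$, so $P_\Bmc(\omega)=0$ if and only if there is some $X\in V$ for which the single factor $P(X=\omega(X)\mid\pi(X)=\omega(\pi(X)))$ equals $0$. The core observation is that each factor in this product is exactly one cell of a conditional probability table in $\Theta$, and the cell used for $X$ is the one determined by how $\omega$ valuates $X$ together with its parents.

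First I would make precise the correspondence between the disjuncts of $\phi_\Bmc$ and the zero-valued cells of $\Theta$. Each disjunct is indexed by a triple consisting of a variable $X$, a value $x$, and a parent valuation $\vec{x}$ such that $P(X=x\mid\pi(X)=\vec{x})=0$; the disjunct itself is the conjunction $(X,x)\land\bigwedge_{Y\in\pi(X),Y=y}(Y,y)$. Using facts (i) and (ii) relating $\omega\models$ to the operations $\lor$ and $\land$, a world $\omega$ satisfies this disjunct if and only if $\omega(X)=x$ and $\omega(\pi(X))=\vec{x}$; that is, if and only if $\omega$'s valuation of $X$ and its parents selects exactly that zero cell.

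For the direction $P_\Bmc(\omega)=0\Rightarrow\omega\models\phi_\Bmc$, I would take a variable $X$ whose factor vanishes, set $x:=\omega(X)$ and $\vec{x}:=\omega(\pi(X))$, and observe that $P(X=x\mid\pi(X)=\vec{x})=0$, so the triple $(X,x,\vec{x})$ indexes a disjunct of $\phi_\Bmc$; by the correspondence above $\omega$ satisfies it, whence $\omega\models\phi_\Bmc$ by fact (i). For the converse, if $\omega\models\phi_\Bmc$ then fact (i) gives that $\omega$ satisfies some disjunct, indexed by a zero cell $(X,x,\vec{x})$; the correspondence forces $\omega(X)=x$ and $\omega(\pi(X))=\vec{x}$, so the $X$-factor of the chain rule is precisely $P(X=x\mid\pi(X)=\vec{x})=0$, and the whole product is $0$.

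I do not expect any genuine obstacle here: the content is entirely the ``product is zero iff some factor is zero'' principle, combined with the fact that in a world each variable together with its parents pins down a unique table cell. The only point requiring care is notational bookkeeping, namely keeping the generic cell index $(X,x,\vec{x})$ of $\phi_\Bmc$ aligned with the specific cell $(X,\omega(X),\omega(\pi(X)))$ that the chain rule uses for $\omega$, and checking that the parent conjunction $\bigwedge_{Y\in\pi(X),Y=y}(Y,y)$ in each disjunct captures exactly the constraint $\omega(\pi(X))=\vec{x}$ rather than something weaker.
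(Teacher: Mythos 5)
Your proposal is correct and follows essentially the same route as the paper's proof: both directions reduce to the chain rule $P_\Bmc(\omega)=\prod_{X\in V}P(X=\omega(X)\mid\pi(X)=\omega(\pi(X)))$ together with the observation that a product of nonnegative factors vanishes iff some factor does, with each zero factor corresponding exactly to one disjunct of $\phi_\Bmc$. Your version merely makes the disjunct-to-cell correspondence more explicit than the paper's terse two-line argument, which is a matter of presentation rather than substance.
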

\modify{
\begin{proof}
If $\omega\models\phi_\Bmc$ then there is a variable $X$ such that $\omega\models (X,x)\land\bigwedge_{Y\in\pi(X),Y=y}(Y,y)$
and $P(X=x\mid \pi(X)=\vec{x})=0$. By the chain rule it follows that $P_\Bmc(\omega)=0$. Conversely, if
$0=P_\Bmc(\omega)=\prod_{X\in V}P(X\mid \pi(X))$ it must be the case that at least one factor in the product is 0; but then
that factor satisfies a disjunct in $\phi_\Bmc$.
\end{proof}
}
Notice that, in general, the context $\phi_\Bmc$ can be computed faster than simply enumerating all possible
worlds. In particular, if the conditional probability tables in $\Theta$ contain no 0-valued cell, then 
$\phi_\Bmc=\bot$; i.e., it is satisfied by no world.

To summarise, in this section we have shown that deciding inconsistency of a \BALC KB $\Kmc=(\Omc,\Bmc)$ can be reduced
to checking satisfiability of the context $\phi_\Kmc^\bot\land\neg\phi_\Bmc$, where $\phi_\Kmc^\bot$ is the context
representing all the worlds $\omega$ where $\Omc_\omega$ is inconsistent, and $\phi_\Bmc$ is the context representing
all worlds $\omega$ with $P_\Bmc(\omega)=0$. We then provided a tableaux-like method for computing $\phi_\Kmc^\bot$,
and an enumeration approach for finding $\phi_\Bmc$. We note that, just like in classical \ALC, the tableau algorithm is not
optimal in terms of computational complexity. \modify{In fact, although consistency of \ALC ontologies is in ExpTime
\cite{DoMa-00}, the tableau algorithm we use as the basis of our approach runs in double exponential time; the main reason
for this is the $\sqcup$-rule which may duplicate the number of ABoxes under consideration. We 
conjecture that it can be implemented in a goal-directed manner which behaves well in practice, but leave this for future work.}

Although consistency is a very important problem to be studied, we are interested also in other reasoning tasks,
which can arise through the use of probabilistic ontologies.
In particular, we should also take into account the contexts and the probabilities provided by the BN beyond the 
question of 
whether they are positive or not. In the next section we study variants of satisfiability and subsumption problems,
before turning our attention to instance checking.

\section{Satisfiability and Subsumption}

In this section, we focus on two problems---satisfiability and subsumption---which depend only on the TBox part of an ontology, 
and hence assume
for the sake of simplicity that the ABox is empty. Thus, we will often write a \BALC KB as a pair $(\Tmc,\Bmc)$ where
\Tmc is a TBox and \Bmc is a BN. We are in general interested in understanding the properties and relationships
of concepts. 

\begin{definition}
Given two concepts $C,D$ and a \BALC KB \Kmc, we say that $C$ is \emph{satisfiable} w.r.t.\ \Kmc iff 
there exists a probabilistic model 
$\mathcal{P = (J, P_J)}$ of $\kbDL$ s.t. $C^\vModelDL\not=\emptyset$ for all $\vModelDL\in \mathcal{J}$. We say that $C$ is
\emph{subsumed} by $D$ w.r.t.\ \Kmc iff for all models $\Pmc=(\Jmc,\Pmc_\Jmc)$ of \Kmc and all $\Vmc\in\Jmc$
$C^\Vmc\subseteq D^\Vmc$.
\end{definition}
 It is possible to adapt the well known reductions from the classical case to show
that these two problems are \ExpTime-complete.
\begin{theorem}
Satisfiability and subsumption w.r.t.\ \BALC KBs are \ExpTime-complete.
\end{theorem}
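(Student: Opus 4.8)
The plan is to prove the theorem by reducing both satisfiability and subsumption w.r.t.\ a \BALC KB to the consistency problem already shown to be \ExpTime-complete in Corollary~\ref{cor:et}, mirroring the classical \ALC reductions while checking that the contextual and probabilistic layers do not interfere. For \ExpTime-hardness, I would observe that since \ALC is a special case of \BALC (take the trivial BN and annotate every axiom with the empty context), and since classical \ALC concept satisfiability and subsumption are already \ExpTime-complete, the corresponding \BALC problems inherit hardness immediately. So the substance of the proof is establishing the \ExpTime upper bounds.

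For the upper bound on satisfiability, the key step is to reduce it to a consistency check. To decide whether $C$ is satisfiable w.r.t.\ $\Kmc=(\Tmc,\Bmc)$, I would introduce a fresh individual name $a$ and form the KB $\Kmc'=(\Tmc\cup\{C(a)^\emptyset\},\Bmc)$. The claim is that $C$ is satisfiable w.r.t.\ \Kmc iff $\Kmc'$ is consistent. The forward direction is immediate: a probabilistic model witnessing $C^\Vmc\neq\emptyset$ for every $\Vmc\in\Jmc$ can be expanded so that each $a^\Vmc$ lands in $C^\Vmc$, giving a model of $\Kmc'$. For the converse I would use Theorem~\ref{thm:cons}: $\Kmc'$ is consistent iff $\Tmc_\omega\cup\{C(a)\}$ is consistent for every world $\omega$ with $P_\Bmc(\omega)>0$; each such classical model places $a^{\Imc_\omega}\in C^{\Imc_\omega}$, and stitching these together as in the proof of Theorem~\ref{thm:cons} produces a probabilistic model in which $C$ is nonempty in every $V$-interpretation. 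Since the empty context holds under every valuation, the assertion $C(a)^\emptyset$ survives into $\Tmc_\omega$ for all positive-probability worlds, which is exactly what the definition of satisfiability requires.

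For subsumption I would use the standard duality: $C$ is subsumed by $D$ w.r.t.\ \Kmc iff the concept $C\sqcap\neg D$ is \emph{not} satisfiable w.r.t.\ \Kmc. The reasoning is that a model $\Pmc$ and a $V$-interpretation $\Vmc\in\Jmc$ with $C^\Vmc\not\subseteq D^\Vmc$ is precisely a witness that $(C\sqcap\neg D)^\Vmc\neq\emptyset$; conversely a nonsubsumption witnessed in one world already violates the universal condition in the definition. I would then invoke the satisfiability reduction just established. Since consistency is decidable in \ExpTime\ and the reductions add only a single assertion (or a concept of linear size after NNF conversion), the overall procedures run in \ExpTime, giving the upper bounds and completing the proof.

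The main obstacle I anticipate is aligning the \emph{quantifier structure} of the definitions with the semantics of consistency. Satisfiability here demands $C^\Vmc\neq\emptyset$ for \emph{all} $\Vmc\in\Jmc$ (not merely for some), and subsumption is likewise a universal statement over all models and all their $V$-interpretations. The delicate point is therefore the converse directions: I must verify that the model constructed in the proof of Theorem~\ref{thm:cons}---which builds exactly one $V$-interpretation per positive-probability world---genuinely satisfies the universal ``for all $\Vmc\in\Jmc$'' condition rather than merely an existential one. Because that construction yields a probabilistic model whose support consists solely of worlds with $P_\Bmc(\omega)>0$, and because the freshly added assertion is labelled with the empty context and hence active in every such world, the universal quantifier is in fact met; confirming this compatibility carefully is where the argument needs the most care.
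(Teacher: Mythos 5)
Your reductions are, in substance, the paper's own: the paper proves this theorem in two lines by asserting that $C$ is satisfiable w.r.t.\ $\Kmc$ iff $(\Tmc\cup\{C(a)^\emptyset\},\Bmc)$ is consistent, and that $C$ is subsumed by $D$ iff $(\Tmc\cup\{(C\sqcap\neg D)(a)^\emptyset\},\Bmc)$ is inconsistent. Your satisfiability argument is exactly the first of these (with the right justification via the model construction of Theorem~\ref{thm:cons}), your hardness argument by specialisation is the intended one, and your subsumption route, once composed with the satisfiability reduction, produces literally the same KB as the paper's.

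There is, however, a genuine gap in the duality step ``$C$ is subsumed by $D$ iff $C\sqcap\neg D$ is not satisfiable,'' and it sits precisely at the quantifier alignment you flagged --- your careful check landed on the satisfiability reduction, which is fine, rather than on this step, which is not. Under the paper's definitions, satisfiability is existential over models but \emph{universal over the worlds of the chosen model}; its negation therefore says that in every model \emph{some} $\Vmc\in\Jmc$ has $(C\sqcap\neg D)^\Vmc=\emptyset$, equivalently that $C\sqcap\neg D$ is classically unsatisfiable w.r.t.\ $\Tmc_\omega$ for \emph{some} positive-probability world $\omega$. Subsumption, being universal over models \emph{and} worlds, requires $\Tmc_\omega\models C\sqsubseteq D$ for \emph{every} positive world. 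These come apart: let $V=\{X\}$ with $X$ Boolean, $P(X)=0.5$, and $\Tmc=\{(A\sqsubseteq B)^{X}\}$. Then $A\sqcap\neg B$ is not satisfiable (every model contains some $\Vmc$ whose valuation makes $X$ true, and there $(A\sqcap\neg B)^\Vmc=\emptyset$), yet $A$ is not subsumed by $B$, since at the positive world $\neg X$ the restriction is empty and admits models with $A^\Vmc\not\subseteq B^\Vmc$. So the direction ``not satisfiable implies subsumed,'' which your upper bound for subsumption relies on, fails. To be fair, the paper's own one-line reduction for subsumption encodes exactly the same existential test and suffers the same mismatch with its universal definition, so you have reproduced the paper's argument faithfully, gap included. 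The theorem itself is unaffected: deciding subsumption amounts to checking $\Tmc_\omega\models C\sqsubseteq D$ for each of the exponentially many positive worlds, each an \ExpTime classical test --- equivalently, in the notation of Section~\ref{sec:consistency}, checking that every world satisfying $\neg\phi_\Bmc$ satisfies $\phi_{\Kmc'}^\bot$ --- which still yields the \ExpTime upper bound.
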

\begin{proof}
Let $\Kmc=(\Tmc,\Bmc)$ and $C,D$ be concepts.
It is easy to see that $C$ is subsumed by $D$ w.r.t.\ \Kmc iff the KB
$\Kmc'=(\Tmc\cup\{(C\sqcap \neg D)(a)^\emptyset\},\Bmc)$
is inconsistent, where $a$ is an arbitrary individual name. Similarly, $C$ is satisfiable iff 
$\Kmc''=(\Tmc\cup\{C(a)^\emptyset\},\Bmc)$ is consistent.
\end{proof}
In the following, we study variants of these problems which take contexts and probabilities into account. 
For a more concise presentation, we will present only
the cases for subsumption. Analogous results hold for satisfiability based on the fact that for every \ALC 
interpretation \Imc, it holds that $C^\Imc=\emptyset$ iff $C^\Imc\subseteq\bot^\Imc$; hence, 
\modify{an approach that is similar to the one used to solve subsumption can be used to solve satisfiability.}
First we consider additional information about contexts; afterwards we compute the probability of an entailment,
and finally the combination of both kinds of information.

\begin{definition}[contextual subsumption]
\label{def:con:sub}
Let $\Kmc=(\Tmc,\Bmc)$ be a \BALC KB, $C,D$ concepts, and $\kappa$ a context. $C$ is \emph{subsumed \modify{by $D$}
in context $\kappa$} w.r.t.\ \Kmc, denoted as $\Kmc\models (C\sqsubseteq D)^\kappa$\!, if every probabilistic
model of \Kmc is also a model of $(C\sqsubseteq D)^\kappa$\!.
\end{definition}
This definition introduces the natural extension of subsumption that considers also the contexts. \modify{Note that in our setting, contexts provide a means to express and reason with probabilities.}

\begin{definition}[subsumption probability]
\label{def:sub:probs}
Let $\probModelDL = (\mathcal{J}, P_{\mathcal{J}})$ be a probabilistic model of the KB $\kbDL$, $\kappa$ a
context, and
$C,D$ two concepts. The \emph{probability of $\balcStmnt{(C \sqsubseteq D)}{\kappa}$} w.r.t.\ \Pmc is
$$P_{\probModelDL}(\balcStmnt{(C \sqsubseteq D)}{\kappa})= 
	\sum_{\vModelDL \in \mathcal{J}, \vModelDL \models \balcStmnt{(C \sqsubseteq D)}{\kappa}} 
		P_{\mathcal{J}}(\vModelDL).$$
The \emph{probability of $\balcStmnt{(C \sqsubseteq D)}{\kappa}$} w.r.t.\  $\kbDL$ is
$$P_{\kbDL}(\balcStmnt{(C \sqsubseteq D)}{\kappa}) = 
	\inf_{\mathcal{P \models K}} P_{\mathcal{P}}(\balcStmnt{(C \sqsubseteq D)}{\kappa}).$$
We say that $C$ is \emph{positively subsumed} by $D$ in $\kappa$ iff $P_\Kmc((C\sqsubseteq D)^\kappa)>0$; it is
\emph{$p$-subsumed} iff $P_\Kmc((C\sqsubseteq D)^\kappa)\ge p$; it is \emph{exactly $p$-subsumed} iff
$P_\Kmc((C\sqsubseteq D)^\kappa)=p$; and it is \emph{almost certainly subsumed} iff
$P_\Kmc((C\sqsubseteq D)^\kappa)=1$.
\end{definition}
\modify{The definition above allows us to consider the problem of probabilistic reasoning.}
That is, the probability of a subsumption in a specific model is the sum of the probabilities of the worlds in which 
$C$ is subsumed by $D$ in context $\kappa$; notice that this trivially includes all worlds where $\kappa$ does not 
hold, by construction (see Definition~\ref{def:model}).
In the case where $\kbDL$ is inconsistent we define the probability of all subsumptions as $1$ to ensure our 
definition is consistent with general probability theory (recall that $\inf (\emptyset) = \infty$ in general). 

When we speak about the probability of a subsumption w.r.t.\ a \BALC KB, we consider the infimum over the probabilities
provided by all possible models of \Kmc. The reason for this is that we want to find the tightest constraint, taking into account
the open world semantics. The infimum is, to a degree, the natural interpretation of the information that holds in all possible
models. When reasoning about probabilities it is not always important to compute a precise value, but rather find some relevant
bounds. Thus, $p$-subsumption considers a lower bound, while positive subsumption only cares about the probability being zero
or not (dually, almost-certainty refers to the probability being one or not). All these decision problems have different applications
and properties.

Contextual subsumption is related to subsumption probability in the obvious way. Namely, a KB \Kmc entails a 
contextual subsumption iff the probability of the subsumption in \Kmc is 1. 

\begin{theorem}
Given a KB $\kbDL$, concepts $C$ and $D$, and a context $\kappa$, it holds that:
$$\kbDL \models \balcStmnt{(C \sqsubseteq D)}{\kappa} 
~\text{ iff }~ 
P_{\kbDL}(\balcStmnt{(C \sqsubseteq D)}{\kappa}) = 1.$$
\end{theorem}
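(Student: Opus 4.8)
The plan is to prove the two directions separately, after first disposing of the degenerate case in which \Kmc is inconsistent, and then to reduce the consistent case to a short computation with the probability sum from Definition~\ref{def:sub:probs}. The single fact that does all the work is that, by definition, a probabilistic interpretation assigns \emph{strictly positive} probability to every $V$-interpretation in its support; this is precisely what lets a one-world failure of the subsumption push the probability strictly below $1$. If \Kmc is inconsistent, it has no probabilistic models, so $\kbDL \models \balcStmnt{(C \sqsubseteq D)}{\kappa}$ holds vacuously, while by the convention adopted right after Definition~\ref{def:sub:probs} we set $P_{\kbDL}(\balcStmnt{(C \sqsubseteq D)}{\kappa}) = 1$; hence both sides are true. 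For the remainder I would assume \Kmc is consistent, so the model set is nonempty and the infimum defining $P_{\kbDL}$ is taken over a nonempty set.

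For the left-to-right direction, suppose $\kbDL \models \balcStmnt{(C \sqsubseteq D)}{\kappa}$ and fix any model $\Pmc = (\Jmc, P_{\Jmc})$ of \Kmc. By the definition of contextual subsumption (Definition~\ref{def:con:sub}) together with the definition of a probabilistic model of an axiom, every $\Vmc \in \Jmc$ satisfies $\balcStmnt{(C \sqsubseteq D)}{\kappa}$. Consequently the restriction $\Vmc \models \balcStmnt{(C \sqsubseteq D)}{\kappa}$ on the sum defining $P_{\Pmc}(\balcStmnt{(C \sqsubseteq D)}{\kappa})$ is satisfied by all of $\Jmc$, so the sum ranges over the entire support and equals $\sum_{\Vmc \in \Jmc} P_{\Jmc}(\Vmc) = 1$. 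Since this holds for every model, the infimum over all models is $1$.

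For the converse I would argue by contraposition, exhibiting a single model that witnesses $P_{\kbDL} < 1$. Assume $\kbDL \not\models \balcStmnt{(C \sqsubseteq D)}{\kappa}$; then there is a model $\Pmc = (\Jmc, P_{\Jmc})$ of \Kmc and some $\Vmc_0 \in \Jmc$ with $\Vmc_0 \not\models \balcStmnt{(C \sqsubseteq D)}{\kappa}$. This $\Vmc_0$ is excluded from the sum defining $P_{\Pmc}(\balcStmnt{(C \sqsubseteq D)}{\kappa})$, so that sum is bounded above by $1 - P_{\Jmc}(\Vmc_0)$. Here the defining requirement $P_{\Jmc}(\Vmc_0) > 0$ makes this bound strictly less than $1$, and since $P_{\kbDL}$ is the infimum over all models it is at most $P_{\Pmc}(\balcStmnt{(C \sqsubseteq D)}{\kappa}) < 1$; in particular $P_{\kbDL} \neq 1$, as required.

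I expect the genuinely delicate points to be bookkeeping rather than mathematical. The hard part is simply not to forget the inconsistent case, which must be handled through the stated convention — otherwise the infimum over the empty set of models would be $+\infty$ rather than $1$ and the equivalence would break. The second point to be careful about is invoking strict positivity of $P_{\Jmc}$ in the converse: without it a ``bad'' world of probability zero would leave the sum at $1$ and the claim would fail. Once the sum in Definition~\ref{def:sub:probs} is read together with the positivity assumption built into probabilistic interpretations, both directions are immediate.
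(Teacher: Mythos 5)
Your proof is correct and takes essentially the same route as the paper's, which simply compresses the argument into one chain of equivalences unfolding Definitions~\ref{def:con:sub} and~\ref{def:sub:probs}. Your version just makes explicit the two points the paper leaves implicit: the strict positivity of $P_{\Jmc}$ needed for the converse direction, and the convention that handles the inconsistent case where the infimum would otherwise be over an empty set.
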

\begin{proof}
$\Kmc\models(C\sqsubseteq D)^\kappa$ iff for every probabilistic interpretation $\Pmc$, if $\Pmc\models\Kmc$, then
$\Pmc\models(C\sqsubseteq D)^\kappa$ iff for every probabilistic model $\Pmc=(\Imc,P_\Imc)$ of \Kmc and every
$\Vmc\in\Imc$, $\Vmc\models(C\sqsubseteq D)^\kappa$ iff for every model \Pmc of \Kmc 
$P_\Pmc((C\sqsubseteq D)^\kappa)=1$ iff 
$\inf_{\mathcal{P \models K}} P_{\mathcal{P}}(\balcStmnt{(C \sqsubseteq D)}{\kappa})=1$.
\end{proof}
This is convenient as it provides a method of reusing our results from Section~\ref{sec:consistency}, which finds a context
describing inconsistency, to compute subsumption probabilities.

\begin{theorem}
\label{thm:sum}
Let $\kbDL = (\tboxDL, \bnDL)$ be a consistent \BALC KB, $C,D$ two concepts, and $\kappa$ a context.
For the KB $\Kmc'=(\Tmc\cup\{C(a)^\kappa,\neg D(a)^\kappa\},\Bmc)$ it holds that
\begin{equation*}
P_{\kbDL}(\balcStmnt{(C \sqsubseteq D)}{\kappa})  =  
	\sum_{\omega \models \phi_{\kbDL^\prime}^\bot} P_\Bmc(\omega) + 1 - P_\Bmc(\kappa).
\end{equation*}
\end{theorem}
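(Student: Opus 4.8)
The plan is to compute the infimum $P_{\Kmc}(\balcStmnt{(C \sqsubseteq D)}{\kappa}) = \inf_{\Pmc \models \Kmc} P_\Pmc(\balcStmnt{(C \sqsubseteq D)}{\kappa})$ by first analysing the subsumption probability of an arbitrary fixed model $\Pmc = (\Jmc, P_\Jmc)$ of $\Kmc$, and then identifying the models minimising it. For a fixed $\Pmc$, I would group the $V$-interpretations in $\Jmc$ according to the world picked out by their valuation, using the consistency condition $\sum_{\Vmc \in \Jmc, \valFuncDL = \omega} P_\Jmc(\Vmc) = P_\Bmc(\omega)$ to rewrite $P_\Pmc(\balcStmnt{(C \sqsubseteq D)}{\kappa})$ as a sum of per-world contributions. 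By Definition~\ref{def:model}, any $\Vmc$ whose valuation does not satisfy $\kappa$ is automatically a model of $\balcStmnt{(C \sqsubseteq D)}{\kappa}$; hence every world $\omega$ with $\omega \not\models \kappa$ contributes its full probability $P_\Bmc(\omega)$, and these contributions together sum to $1 - P_\Bmc(\kappa)$. The remaining task is to bound the contribution of the worlds with $\omega \models \kappa$.

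The key observation is that, for a world $\omega \models \kappa$ with $P_\Bmc(\omega) > 0$, the restriction $\Omc'_\omega$ of the extended ontology equals $\Tmc_\omega \cup \{C(a), \neg D(a)\}$, and this \ALC ontology is inconsistent precisely when $\Tmc_\omega \models C \sqsubseteq D$: interpreting $a$ as a witness of $C \sqcap \neg D$ is possible exactly when $\Tmc_\omega \not\models C \sqsubseteq D$, and here I use that consistency of $\Kmc$ forces $\Tmc_\omega$ itself to be consistent for every positive-probability world, by Theorem~\ref{thm:cons}. When $\omega \not\models \kappa$, the restriction is just $\Tmc_\omega$, which remains consistent. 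Consequently, by Theorem~\ref{thm:tab}, among the positive-probability worlds we have $\omega \models \phi_{\Kmc'}^\bot$ iff $\omega \models \kappa$ and $\Tmc_\omega \models C \sqsubseteq D$. For such worlds every admissible $V$-interpretation satisfies $C \sqsubseteq D$, so their contribution is forced to be the full $P_\Bmc(\omega)$; for worlds with $\omega \models \kappa$ but $\Tmc_\omega \not\models C \sqsubseteq D$, there is a model of $\Tmc_\omega$ with a nonempty $C \sqcap \neg D$, so a $V$-interpretation violating the subsumption is available and the contribution can be driven to $0$.

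Putting these together yields, for every model $\Pmc$, the lower bound $P_\Pmc(\balcStmnt{(C \sqsubseteq D)}{\kappa}) \ge (1 - P_\Bmc(\kappa)) + \sum_{\omega \models \phi_{\Kmc'}^\bot} P_\Bmc(\omega)$, where zero-probability worlds are harmless since they add nothing to the sum. To see this bound is tight I would construct a witnessing model, mirroring the construction in the proof of Theorem~\ref{thm:cons}: for each positive-probability world $\omega$ I pick a single \ALC model $\Imc_\omega$ of $\Tmc_\omega$, choosing one in which $(C \sqcap \neg D)^{\Imc_\omega} \neq \emptyset$ whenever $\omega \models \kappa$ and $\Tmc_\omega \not\models C \sqsubseteq D$, set $\Vmc_\omega = (\Delta^{\Imc_\omega}, \cdot^{\Imc_\omega}, \omega)$, and let $P_\Jmc(\Vmc_\omega) = P_\Bmc(\omega)$. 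By design the only worlds contributing to the subsumption probability of this $\Pmc$ are those with $\omega \not\models \kappa$ together with the forced worlds in $\phi_{\Kmc'}^\bot$, so the infimum is actually attained. I expect the main obstacle to be precisely this achievability argument: one must verify that the per-world optimal choices can be made simultaneously (they can, since the choices for distinct worlds are independent) and that the resulting $\Pmc$ is a legitimate probabilistic model consistent with $\Bmc$, which is exactly where the consistency of $\Kmc$, and hence of each $\Tmc_\omega$, is essential.
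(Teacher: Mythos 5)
Your proposal is correct and follows essentially the same route as the paper: decompose $P_\Pmc(\balcStmnt{(C\sqsubseteq D)}{\kappa})$ per world into the vacuous part $1-P_\Bmc(\kappa)$ plus the $\kappa$-worlds, characterise the forced worlds as exactly those satisfying $\phi_{\Kmc'}^\bot$ (via inconsistency of $\Tmc_\omega\cup\{C(a),\neg D(a)\}$ iff $\Tmc_\omega\models C\sqsubseteq D$), and attain the bound with the minimal-model construction borrowed from Theorem~\ref{thm:cons}. If anything, your write-up is more careful than the paper's terse version about the direction of the two bounds (per-model lower bound versus achievability) and about why positive-probability worlds with $\omega\not\models\kappa$ cannot satisfy $\phi_{\Kmc'}^\bot$, which rules out double counting.
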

\begin{proof}
Let $\Omc=(\Tmc\cup\{C(a)^\kappa,\neg D(a)^\kappa\})$, so $\Kmc'=(\Omc,\Bmc)$.
For an arbitrary but fixed model $\Pmc=(\Jmc,P_\Jmc)$ of \Kmc, we have that
\begin{align*}
P_\Pmc((C\sqsubseteq D)^\kappa)= {} & \sum_{\Vmc\in\Jmc,\Vmc\models(C\sqsubseteq D)^\kappa}P_\Jmc(\Vmc) \\
	= {} & \sum_{\valFuncDL\not\models\kappa}P_\Jmc(\Vmc) + 
		  \sum_{\valFuncDL\models\kappa,\Vmc\models C\sqsubseteq D} P_\Jmc(\Vmc) \\
	= {} & 1- P_\Bmc(\kappa) + 
		  \sum_{\Vmc\not\models \Kmc'} P_\Jmc(\Vmc) \\
	\le {} & 1- P_\Bmc(\kappa) + \modify{\sum_{\{\Vmc\mid\ \Omc_{\valFuncDL}\text{ inconsistent}\}} P_\Jmc(\Vmc)}.
\end{align*}
Hence 
$P_\Kmc((C\sqsubseteq D)^\kappa) \le \sum_{\omega \models \phi_{\kbDL^\prime}^\bot} P_\Bmc(\omega) + 1 - P_\Bmc(\kappa)$. 
For the lower bound, it suffices to build a minimal model of \Kmc, where for each world $\omega$ we have an interpretation
$\Vmc_\omega$ such that $\Vmc_\omega\models C\sqsubseteq D$ iff $\Omc_\omega\models C\sqsubseteq D$ as in the
proof of Theorem~\ref{thm:cons}.
\end{proof}
\begin{example}
\label{exa:prob:sub}
Returning to our running example with the KB \Kmc from Example~\ref{exa:ont}, suppose that we are interested in
finding the probability of water being drinkable. That is, we want to compute
$P_\Kmc((\text{Water}\sqsubseteq\text{Drinkable})^\emptyset)$. Since $P_\Bmc(\emptyset)=1$, Theorem~\ref{thm:sum}
states that it suffices to find the probabilities of all worlds $\omega$ such that the restriction of
$\Omc\cup\{\text{Water}(a),\neg\text{Drinkable}(a)\}$ to $\omega$ is inconsistent.
We note that the only such worlds are those which satisfy $W$. Hence, 
\begin{align*}
P_\Kmc((\text{Water}\sqsubseteq\text{Drinkable})^\emptyset) = {} &
	\sum_{\omega\models W}P_\Bmc(\omega).
\end{align*}
Those worlds, and their probabilities, are shown in Table~\ref{tab:worlds}.
\begin{table}
\caption{Worlds $\omega$ such that $\omega\models W$ and their probability w.r.t\ the BN from Figure~\ref{fig:BN}.}
\label{tab:worlds}
\begin{tabular}{@{}rrrr@{\qquad }l@{}}
\toprule
\multicolumn{4}{c}{world $\omega$} & $P_\Bmc(\omega)$ \\
\midrule
$X$ & $Y$ & $Z$ & $W$ & 0 \\
$X$ & $Y$ & $\neg Z$ & $W$ & 0.063\\
$X$ & $\neg Y$ & $Z$ & $W$ & 0 \\
$X$ & $\neg Y$ & $\neg Z$ & $W$ & 0.567\\
$\neg X$ & $Y$ & $Z$ & $W$ & 0 \\
$\neg X$ & $Y$ & $\neg Z$ & $W$ & 0.189\\
$\neg X$ & $\neg Y$ & $Z$ & $W$ & 0.0162\\
$\neg X$ & $\neg Y$ & $\neg Z$ & $W$ & 0.0108\\
\bottomrule
\end{tabular}
\end{table}
Adding all these values we get that the probability is $0.8460$. In other words, according to our model, the probability of a building
having drinkable water (in the absence of any other information) is very high.
\end{example}
Notice that the formula $\phi_{\Kmc'}^\bot$ requires at most exponential space on the size of \Tmc to be 
encoded. For each of the exponentially many worlds, computing $P_\Bmc(\omega)$ requires polynomial time
due to the chain rule. Hence, overall, the computation of the subsumption probabilities requires exponential time. 
Importantly, this bound does not depend on
how $\phi_{\Kmc'}^\bot$ was computed; it could have e.g.\ been computed through the process described in
Corollary~\ref{cor:et}. This provides an exponential upper bound for computing the 
probability of a subsumption.

\begin{corollary}
The probability of a subsumption w.r.t.\ a KB can be computed in exponential time 
on the size of the KB.
\end{corollary}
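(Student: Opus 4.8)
The plan is to turn the closed-form expression of Theorem~\ref{thm:sum} into an explicit procedure and then bound the cost of each of its ingredients. Given a KB $\Kmc=(\Tmc,\Bmc)$, concepts $C,D$, and a context $\kappa$, I would first dispose of the degenerate case: using the \ExpTime\ consistency test from Corollary~\ref{cor:et}, I check whether \Kmc\ is consistent, and if it is not, I return $1$ immediately, since by the convention adopted in Definition~\ref{def:sub:probs} the probability of every subsumption is then $1$. This step is not merely cosmetic, because Theorem~\ref{thm:sum} is stated only for consistent KBs and hence cannot be invoked blindly; clearing the inconsistent case first is what makes the reduction legitimate.

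Assuming \Kmc\ is consistent, I would form the auxiliary KB $\Kmc'=(\Tmc\cup\{C(a)^\kappa,\neg D(a)^\kappa\},\Bmc)$, which differs from \Kmc\ by only two assertions and is therefore of size polynomial in that of \Kmc. Theorem~\ref{thm:sum} then reduces the quantity we want to
$$P_{\Kmc}((C\sqsubseteq D)^\kappa)=\sum_{\omega\models\phi_{\Kmc'}^\bot}P_\Bmc(\omega)+1-P_\Bmc(\kappa),$$
so everything hinges on evaluating this right-hand side in exponential time. The decisive observation is that $\phi_{\Kmc'}^\bot$ is an object of at most exponential size in $|\Tmc|$, \emph{independently of how it is produced}. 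In particular, rather than relying on the tableau of Theorem~\ref{thm:tab}---whose $\sqcup$-rule may cause a double-exponential blow-up---I would obtain $\phi_{\Kmc'}^\bot$ via the brute-force method of Corollary~\ref{cor:et}: enumerate the (exponentially many) worlds $\omega$, test classical consistency of the restriction of $\Tmc\cup\{C(a)^\kappa,\neg D(a)^\kappa\}$ to $\omega$ in exponential time each, and collect the inconsistent ones. This yields $\phi_{\Kmc'}^\bot$ in exponential time and as an exponentially sized object.

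The remaining work is pure summation. There are exponentially many worlds over $V$; for each, $P_\Bmc(\omega)$ is computed in polynomial time by the chain rule, deciding $\omega\models\phi_{\Kmc'}^\bot$ costs time polynomial in $|\phi_{\Kmc'}^\bot|$ (hence at most exponential), and $P_\Bmc(\kappa)=\sum_{\omega\models\kappa}P_\Bmc(\omega)$ is obtained by the same sweep. Combining an exponential number of worlds with an at-most-exponential cost per world keeps the total within exponential time. I expect the only point genuinely demanding care to be precisely this complexity bookkeeping: one must note that the product of two single-exponential quantities is still single-exponential, so neither the exponential size of $\phi_{\Kmc'}^\bot$ nor the exponential number of worlds pushes the estimate past \ExpTime. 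Everything else is the assembly of results already established, and with these bounds in place the corollary follows.
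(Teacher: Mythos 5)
Your proposal is correct and follows essentially the same route as the paper: evaluate the closed form of Theorem~\ref{thm:sum} by obtaining $\phi_{\Kmc'}^\bot$ through the world-enumeration procedure of Corollary~\ref{cor:et} (the paper explicitly notes the bound is independent of how $\phi_{\Kmc'}^\bot$ is computed, precisely to avoid the double-exponential tableau), then sum $P_\Bmc(\omega)$ over exponentially many worlds using the chain rule. Your explicit up-front dispatch of the inconsistent case (returning $1$ before invoking Theorem~\ref{thm:sum}) is a small but legitimate refinement of a point the paper leaves implicit in its convention from Definition~\ref{def:sub:probs}.
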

Obviously, an exponential-time upper bound for computing the exact probability of a subsumption relation
immediately yields an \ExpTime upper bound for deciding the other problems introduced in 
Definition~\ref{def:sub:probs} as well. All these problems are also generalisations of the subsumption problem in
\ALC. More precisely, given an \ALC TBox \Tmc, we can create the \BALC KB $\Kmc=(\Tmc',\Bmc)$ where
$\Tmc'$ contains all the axioms in \Tmc labelled with the context $x$ and \Bmc contains only one Boolean
node $x$ that holds with probability 1. Given two concepts $C,D$ $\Tmc\models C\sqsubseteq D$ iff
$C$ is almost certainly subsumed by $D$ in context $x$. Since subsumption in \ALC is already \ExpTime-hard,
we get that all these problems are \ExpTime-complete.

In practice, however, it may be too expensive to compute the exact probability when we are only interested
in determining lower bounds, or the possibility of observing an entailment; for instance, when considering
positive subsumption. Recall once again that, according to our semantics, a contextual GCI $(C\sqsubseteq D)^\kappa$
will hold in any world $\omega$ such that $\omega\not\models\kappa$. Thus, if the probability of this world is
positive ($P_\Bmc(\omega)>0$), we can immediately guarantee that $P_\Kmc((C\sqsubseteq D)^\kappa)>0$.
Thus, positive subsumption can be decided without any ontological reasoning for any context that is not
almost certain. In all other cases, the problem can still be reduced to inconsistency.

\begin{theorem}
\label{thm:possub}
Let $\Kmc=(\Tmc,\Bmc)$ be a consistent KB.
The concept $C$ is positively subsumed by $D$ in context $\kappa$ w.r.t.\ \Kmc  iff 
$\Kmc'=(\Tmc\cup\{C(a)^\kappa,\neg D(a)^\kappa\},\Bmc)$ is inconsistent or $P_\Bmc(\kappa)<1$.
\end{theorem}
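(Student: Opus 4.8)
The plan is to read the statement directly off the closed form for the subsumption probability established in Theorem~\ref{thm:sum}. Since \Kmc is consistent, that theorem applies to $\Kmc'=(\Tmc\cup\{C(a)^\kappa,\neg D(a)^\kappa\},\Bmc)$ and gives
\[
P_{\kbDL}(\balcStmnt{(C \sqsubseteq D)}{\kappa}) = \sum_{\omega \models \phi_{\kbDL^\prime}^\bot} P_\Bmc(\omega) + 1 - P_\Bmc(\kappa).
\]
By Definition~\ref{def:sub:probs}, $C$ being positively subsumed by $D$ in $\kappa$ means exactly that this quantity is strictly positive, so the entire proof reduces to determining when the right-hand side exceeds $0$.

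First I would note that both summands are nonnegative: each $P_\Bmc(\omega)\ge 0$, so the sum over worlds is nonnegative, and $P_\Bmc(\kappa)\le 1$, so $1-P_\Bmc(\kappa)\ge 0$. Consequently the total is strictly positive if and only if at least one of the two summands is strictly positive, which splits the goal into characterising positivity of each summand separately.

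Next I would treat the first summand. It satisfies $\sum_{\omega\models\phi_{\Kmc'}^\bot} P_\Bmc(\omega)>0$ iff there exists a world $\omega$ with $\omega\models\phi_{\Kmc'}^\bot$ and $P_\Bmc(\omega)>0$. By Theorem~\ref{thm:tab}, $\omega\models\phi_{\Kmc'}^\bot$ is equivalent to the restriction of $\Kmc'$ to $\omega$ being inconsistent, and by Theorem~\ref{thm:cons} the existence of such a positive-probability world with inconsistent restriction is precisely the statement that $\Kmc'$ is inconsistent. Hence the first summand is positive iff $\Kmc'$ is inconsistent. The second summand is positive iff $P_\Bmc(\kappa)<1$, immediately from its form.

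Combining these two observations, $P_\Kmc((C\sqsubseteq D)^\kappa)>0$ holds iff $\Kmc'$ is inconsistent or $P_\Bmc(\kappa)<1$, which is the claim. I do not expect a genuine obstacle: the argument is a two-case analysis made possible by the closed form of Theorem~\ref{thm:sum}. The only step needing slight care is chaining Theorems~\ref{thm:tab} and~\ref{thm:cons} to convert ``positive probability mass concentrated on inconsistent worlds'' into ``$\Kmc'$ is inconsistent'', together with the elementary observation that nonnegativity of the summands makes positivity of the total equivalent to positivity of some summand.
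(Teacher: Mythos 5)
Your proof is correct, and it rests on the same pillar as the paper's: reading the theorem off the closed form of Theorem~\ref{thm:sum}. The organization differs, though, in a way worth recording. The paper splits on the value of $P_\Bmc(\kappa)$: when $P_\Bmc(\kappa)<1$ the claim is immediate from Theorem~\ref{thm:sum}, and when $P_\Bmc(\kappa)=1$ it argues that inconsistency of $\Kmc'$ can only arise from the two added assertions, concluding $\kappa\models\phi_{\Kmc'}^\bot$ and hence $\sum_{\omega\models\phi_{\Kmc'}^\bot}P_\Bmc(\omega)=1$. You instead decompose the right-hand side into its two nonnegative summands and characterize positivity of each separately, converting positivity of the first summand into inconsistency of $\Kmc'$ by chaining Theorem~\ref{thm:tab} with Theorem~\ref{thm:cons}. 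Your version is in fact tighter than the paper's: the intermediate claim $\kappa\models\phi_{\Kmc'}^\bot$ (with the sum equal to $1$) is stronger than what inconsistency of $\Kmc'$ warrants, since inconsistency only guarantees \emph{some} positive-probability world $\omega\models\kappa$ whose restriction is inconsistent, not all of them --- e.g., take $\Tmc=\{(A\sqsubseteq B)^{X}\}$ with a single Boolean variable $X$, $P(X)=1/2$, $\kappa=\emptyset$, $C=A$, $D=B$: then $\Kmc'$ is inconsistent via the world $\{X\}$, yet the world $\{\neg X\}$ does not satisfy $\phi_{\Kmc'}^\bot$, so the sum is $1/2$, not $1$. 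As your argument makes explicit, a single such world already suffices to make the probability positive, which is all the theorem needs, and your summand-by-summand analysis also covers the converse direction in the $P_\Bmc(\kappa)=1$ case (consistency of $\Kmc'$ forces the sum, and hence the subsumption probability, to be $0$), a direction the paper's proof leaves implicit.
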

\begin{proof}
Assuming that \Kmc is consistent, if $P_\Bmc(\kappa)<1$, the result is an immediate consequence of
Theorem~\ref{thm:sum}. Otherwise, $P_\Bmc(\kappa)=1$, and since \Kmc is consistent, inconsistency of
$\Kmc'$ can only arise from the two new assertions, which means that $\kappa\models\phi_{\Kmc'}^\bot$ and
in particular $\sum_{\omega \models \phi_{\kbDL^\prime}^\bot} P_\Bmc(\omega)=1$.
\end{proof}
This theorem requires the context $\kappa$ to be specified to decide positive subsumption.
If $\kappa$ is not known before hand, it is also possible to leverage the inconsistency decision process, which
is the most expensive part of this method; recall that otherwise we only need $\phi_\Bmc$, which can be
easily computed from \Bmc. Let 
$\Kmc_\emptyset:=(\Tmc\cup\{C(a)^\emptyset,\neg D(a)^\emptyset\},\Bmc)$. That is, $\Kmc_\emptyset$ extends \Kmc 
with assertions negating the subsumption relation, which should hold in all contexts. From 
Theorem~\ref{thm:tab} we conclude that $\phi_{\Kmc_\emptyset}^\bot$ encodes all the contexts in which 
$C\sqsubseteq D$ must hold. Notice that the computation of $\phi_{\Kmc_\emptyset}^\bot$ does not depend
on the context $\kappa$ but can be used to decide positive subsumption for any context.

\begin{corollary}
The concept $C$ is positively subsumed by $D$ in context $\kappa$ w.r.t.\ $\Kmc=(\Tmc,\Bmc)$ iff 
$\kappa$ entails $\phi_{\Kmc_\emptyset}^\bot$ or $P_\Bmc(\kappa)<1$.
\end{corollary}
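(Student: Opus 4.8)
The plan is to obtain this corollary directly from Theorem~\ref{thm:possub} by replacing the inconsistency test on $\Kmc'$ with the entailment test on the precomputed, context-independent formula $\phi_{\Kmc_\emptyset}^\bot$. By Theorem~\ref{thm:possub}, $C$ is positively subsumed by $D$ in $\kappa$ w.r.t.\ \Kmc iff $\Kmc'=(\Tmc\cup\{C(a)^\kappa,\neg D(a)^\kappa\},\Bmc)$ is inconsistent or $P_\Bmc(\kappa)<1$. Since the second disjunct already appears verbatim in the statement to be proved, the whole task reduces to showing that, modulo that disjunct, inconsistency of $\Kmc'$ is captured exactly by $\kappa\models\phi_{\Kmc_\emptyset}^\bot$.

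First I would relate the two auxiliary KBs $\Kmc'$ and $\Kmc_\emptyset$. For every world $\omega$ with $\omega\models\kappa$ the two labelled assertions are active in both KBs, so the restrictions coincide: $(\Tmc\cup\{C(a)^\kappa,\neg D(a)^\kappa\})_\omega = \Tmc_\omega\cup\{C(a),\neg D(a)\} = (\Tmc\cup\{C(a)^\emptyset,\neg D(a)^\emptyset\})_\omega$. By Theorem~\ref{thm:tab}, $\omega\models\phi_{\Kmc'}^\bot$ iff this restriction is inconsistent, and the same holds for $\phi_{\Kmc_\emptyset}^\bot$; hence $\phi_{\Kmc'}^\bot$ and $\phi_{\Kmc_\emptyset}^\bot$ agree on every world satisfying $\kappa$, which yields $\kappa\models\phi_{\Kmc'}^\bot$ iff $\kappa\models\phi_{\Kmc_\emptyset}^\bot$. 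This is the bridge that lets me state the result in terms of $\phi_{\Kmc_\emptyset}^\bot$, whose computation never mentions $\kappa$ and can therefore be reused for every context.

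It then remains to show that the disjunction ``$\Kmc'$ inconsistent or $P_\Bmc(\kappa)<1$'' is equivalent to ``$\kappa\models\phi_{\Kmc_\emptyset}^\bot$ or $P_\Bmc(\kappa)<1$''. When $P_\Bmc(\kappa)<1$ both disjunctions hold trivially, so I would assume $P_\Bmc(\kappa)=1$ and prove that $\Kmc'$ is inconsistent iff $\kappa\models\phi_{\Kmc'}^\bot$, then apply the bridge above. For the direction from right to left, $P_\Bmc(\kappa)=1$ guarantees a world $\omega_0$ with $P_\Bmc(\omega_0)>0$ and $\omega_0\models\kappa$; if $\kappa\models\phi_{\Kmc'}^\bot$ then $\omega_0\models\phi_{\Kmc'}^\bot$, so $(\Omc')_{\omega_0}$ is inconsistent and $\Kmc'$ is inconsistent by Theorem~\ref{thm:cons}. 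The converse direction is exactly the step already invoked inside the proof of Theorem~\ref{thm:possub}: when $P_\Bmc(\kappa)=1$ and \Kmc is consistent, any inconsistency of $\Kmc'$ must be caused by the two added $\kappa$-assertions, forcing $\sum_{\omega\models\phi_{\Kmc'}^\bot}P_\Bmc(\omega)=1$ and hence $\kappa\models\phi_{\Kmc'}^\bot$.

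I expect this last converse step to be the main obstacle, since it must turn the existence of a single inconsistent world of positive probability into the universal entailment $\kappa\models\phi_{\Kmc'}^\bot$. The argument leans essentially on the consistency of \Kmc (so that $\Tmc_\omega$ alone never produces a clash on any positive-probability world) together with $P_\Bmc(\kappa)=1$ (so that every unit of probability mass sits on a $\kappa$-world); together these force the inconsistent worlds to carry the full mass. Once this equivalence is secured, substituting $\phi_{\Kmc_\emptyset}^\bot$ for $\phi_{\Kmc'}^\bot$ via the bridge of the second paragraph closes the proof, with the practical payoff that the expensive inconsistency computation producing $\phi_{\Kmc_\emptyset}^\bot$ is performed only once and reused across all contexts $\kappa$.
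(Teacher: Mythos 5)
Your route is the one the paper intends: the corollary is stated there without an explicit proof, as an immediate consequence of Theorem~\ref{thm:possub} plus the remark that $\phi_{\Kmc_\emptyset}^\bot$ does not depend on $\kappa$, and your second paragraph supplies exactly the bridge the paper leaves implicit---for every world $\omega\models\kappa$ the restrictions of $\Kmc'$ and $\Kmc_\emptyset$ coincide, so by Theorem~\ref{thm:tab} the two clash contexts agree on all $\kappa$-worlds, whence $\kappa\models\phi_{\Kmc'}^\bot$ iff $\kappa\models\phi_{\Kmc_\emptyset}^\bot$. That bridge, the trivial case $P_\Bmc(\kappa)<1$, and your right-to-left direction under $P_\Bmc(\kappa)=1$ are all correct.

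The gap sits exactly where you feared, and your mass-concentration argument does not close it: under $P_\Bmc(\kappa)=1$ and consistency of \Kmc, inconsistency of $\Kmc'$ does \emph{not} force $\kappa\models\phi_{\Kmc'}^\bot$. Theorem~\ref{thm:cons} produces only \emph{one} positive-probability $\kappa$-world with an inconsistent restriction, whereas the entailment $\kappa\models\phi_{\Kmc'}^\bot$ quantifies over \emph{all} $\kappa$-worlds, including zero- or low-probability ones where $\Tmc_\omega$ is strictly weaker. Concretely, let $V=\{X\}$ with $P(X)=0.5$, $\Tmc=\{(A\sqsubseteq B)^X\}$, $\kappa=\emptyset$, $C=A$, $D=B$. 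Then \Kmc is consistent and $P_\Bmc(\kappa)=1$; the world $X$ restricts $\Omc'$ to the inconsistent $\{A\sqsubseteq B, A(a), \neg B(a)\}$, so $\Kmc'$ is inconsistent and, by Theorem~\ref{thm:sum}, $P_\Kmc((A\sqsubseteq B)^\emptyset)=0.5>0$, i.e.\ $A$ \emph{is} positively subsumed by $B$; yet $\phi_{\Kmc_\emptyset}^\bot$ is equivalent to $X$, which $\kappa=\emptyset$ does not entail (the world $\neg X$ satisfies $\kappa$ but not $X$), and $P_\Bmc(\kappa)<1$ fails too. So your chain of equivalences breaks precisely at the left-to-right step---and, to be fair, you inherited the leap from the paper itself: the proof of Theorem~\ref{thm:possub} asserts verbatim that inconsistency of $\Kmc'$ ``means that $\kappa\models\phi_{\Kmc'}^\bot$'' (harmless there, since that theorem's statement never mentions $\phi_{\Kmc'}^\bot$, but fatal here, and indeed the counterexample refutes the corollary as stated). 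The repair is to weaken entailment to joint satisfiability: $C$ is positively subsumed by $D$ in $\kappa$ iff $P_\Bmc(\kappa)<1$ or $\kappa\land\phi_{\Kmc_\emptyset}^\bot\land\neg\phi_\Bmc$ is satisfiable; with that formulation your bridge and right-to-left argument go through unchanged, and left-to-right becomes immediate from Theorems~\ref{thm:cons} and~\ref{thm:bn}.
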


\medskip

Considering the probabilities of contextual subsumption relations may lead to unexpected or unintuitive results arising from
the contextual semantics. Indeed, it always holds (see Theorem~\ref{thm:possub}) that
$P_\Kmc((C\sqsubseteq D)^\kappa)\ge 1-P_\Bmc(\kappa)$. In other words, the probability of a subsumption
in a very unlikely context will always be very high, regardless of the KB and concepts used. In essence, this is
caused because the semantics of contextual subsumption impose a variation of a material implication---if $\kappa$ holds,
then the subsumption holds---which is vacuously true if the premise (if $\kappa$ holds) is false.
In some cases,
it may be more meaningful to consider a \emph{conditional} probability under the assumption that the context
$\kappa$ holds, as defined next.
\begin{definition}[conditional subsumption]
\label{def:sub:cprobs}
Let $\probModelDL = (\mathcal{J}, P_{\mathcal{J}})$ be a probabilistic model of the KB $\kbDL$, 
$\kappa, \lambda$ two contexts with $P_\Bmc(\lambda)>0$, and
$C,D$ two concepts. The \emph{conditional probability of $(C\sqsubseteq D)^\kappa$ given $\lambda$} 
w.r.t.\ \Pmc is
$$P_{\probModelDL}(\balcStmnt{(C \sqsubseteq D)}{\kappa}\mid \lambda)= 
	\frac{\sum_{\vModelDL \in \mathcal{J}, v^\Vmc\models\lambda,\vModelDL \models \balcStmnt{(C \sqsubseteq D)}{\kappa}} P_{\mathcal{J}}(\vModelDL)}{P_\Bmc(\lambda)}.$$
The \emph{conditional probability of $\balcStmnt{(C \sqsubseteq D)}{\kappa}$ given $\lambda$} w.r.t.\  $\kbDL$ is
$$P_{\kbDL}(\balcStmnt{(C \sqsubseteq D)}{\kappa}\mid\lambda) = 
	\inf_{\mathcal{P \models K}} P_{\mathcal{P}}(\balcStmnt{(C \sqsubseteq D)}{\kappa}\mid\lambda).$$
\end{definition}
This definition follows the same principles of conditioning in probability theory, but extended to the open world
interpretation provided by our model-based semantics. Notice that, in addition to the scaling factor 
$P_\Bmc(\lambda)$ in the denominator, the nominator is also differentiated from Definition~\ref{def:sub:probs}
by considering only the worlds that satisfy the context $\lambda$ already.

\modify{Notice that the conditional probabilities in Definition \ref{def:sub:cprobs} are pretty similar to those in 
Definition~\ref{def:sub:probs}, except that the sum restricts to only the worlds that satisfy the context
$\lambda$.} Thus, the numerator can be obtained from the contextual probability in the context 
$\kappa\land\lambda$ excluding the worlds that violate $\lambda$. More formally, we have that 
\begin{align*}
\qquad 
\sum_{\mathclap{\vModelDL \in \mathcal{J}, v^\Vmc\models\lambda,\vModelDL \models \balcStmnt{(C \sqsubseteq D)}{\kappa}}} P_{\mathcal{J}}(\vModelDL) 
	= {} & \sum_{v^\Vmc\models\lambda,v^\Vmc\not\models\kappa} P_\Jmc(\Vmc) + 
		  \sum_{v^\Vmc\models\lambda\land\kappa,\Vmc\models C\sqsubseteq D} P_\Jmc(\Vmc) \\
	= {} & P_\Bmc(\lambda)-P_\Bmc(\lambda\land\kappa) + 
		P_\Pmc((C\sqsubseteq D)^{\lambda\land\kappa}) - 1 + P_\Bmc(\lambda\land\kappa) \\
	= {} & P_\Pmc((C\sqsubseteq D)^{\lambda\land\kappa}) + P_\Bmc(\lambda) - 1.
\end{align*}
Thus we get the following result.
\begin{theorem}
\label{thm:cond:sub}
$P_\Kmc((C\sqsubseteq D)^\kappa\mid\lambda)
	=\frac{P_\Pmc((C\sqsubseteq D)^{\lambda\land\kappa}) + P_\Bmc(\lambda) - 1}{P_\Bmc(\lambda)}$.
\end{theorem}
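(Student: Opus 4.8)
The plan is to piggy-back on the displayed computation that immediately precedes the theorem, which already does all the combinatorial work. That computation establishes, for an arbitrary but fixed probabilistic model $\Pmc=(\Jmc,P_\Jmc)$ of $\Kmc$, that the numerator appearing in Definition~\ref{def:sub:cprobs} satisfies
\[
\sum_{\vModelDL \in \mathcal{J},\, \valFuncDL\models\lambda,\, \vModelDL \models \balcStmnt{(C \sqsubseteq D)}{\kappa}} P_{\mathcal{J}}(\vModelDL)
= P_\Pmc(\balcStmnt{(C \sqsubseteq D)}{\lambda\land\kappa}) + P_\Bmc(\lambda) - 1 .
\]
So the first and only real step is to turn this equality between sums into the claimed equality between infima. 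First I would divide both sides by $P_\Bmc(\lambda)$; this is legitimate because Definition~\ref{def:sub:cprobs} requires $P_\Bmc(\lambda)>0$. By the definition of the conditional probability w.r.t.\ a single model, the left-hand side becomes $P_\Pmc(\balcStmnt{(C \sqsubseteq D)}{\kappa}\mid\lambda)$, giving the \emph{per-model} identity
\[
P_\Pmc(\balcStmnt{(C \sqsubseteq D)}{\kappa}\mid\lambda)
= \frac{P_\Pmc(\balcStmnt{(C \sqsubseteq D)}{\lambda\land\kappa}) + P_\Bmc(\lambda) - 1}{P_\Bmc(\lambda)},
\]
valid for every $\Pmc\models\Kmc$.

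Second, I would take $\inf_{\Pmc\models\Kmc}$ of both sides. On the left this is, by Definition~\ref{def:sub:cprobs}, exactly $P_\Kmc(\balcStmnt{(C \sqsubseteq D)}{\kappa}\mid\lambda)$. On the right, the crucial observation is that the only model-dependent quantity is $P_\Pmc(\balcStmnt{(C \sqsubseteq D)}{\lambda\land\kappa})$: the factor $P_\Bmc(\lambda)$ and the additive constant $-1$ are determined entirely by the BN and are the same for every model, since every model's distribution must be consistent with $\Bmc$. The map $t\mapsto (t+P_\Bmc(\lambda)-1)/P_\Bmc(\lambda)$ is affine with strictly positive slope $1/P_\Bmc(\lambda)$, hence monotonically increasing, so it commutes with the infimum. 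Pushing the infimum inside replaces $P_\Pmc(\balcStmnt{(C \sqsubseteq D)}{\lambda\land\kappa})$ by $\inf_{\Pmc\models\Kmc}P_\Pmc(\balcStmnt{(C \sqsubseteq D)}{\lambda\land\kappa})$, which is $P_\Kmc(\balcStmnt{(C \sqsubseteq D)}{\lambda\land\kappa})$ by Definition~\ref{def:sub:probs}. This delivers the stated formula, read with $P_\Kmc$ in the numerator as the natural completion once the infimum is taken.

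The main obstacle, and the only point I would argue in detail, is the interchange of the infimum with the affine transformation. It is valid precisely for two reasons: the slope $1/P_\Bmc(\lambda)$ is \emph{positive}, which guarantees monotonicity and hence that an increasing function preserves infima, and the additive and multiplicative constants are \emph{model-independent}, which lets them be pulled outside the infimum. Everything else --- the expansion of $\vModelDL\models\balcStmnt{(C \sqsubseteq D)}{\kappa}$ into the case where the context fails and the case where the subsumption itself holds, and the telescoping of the $P_\Bmc(\lambda\land\kappa)$ terms --- is already discharged by the preceding display, so I would simply cite it rather than reprove it.
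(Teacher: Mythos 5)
Your proof is correct and takes essentially the same route as the paper: the paper derives the theorem directly from the displayed computation immediately preceding it (``Thus we get the following result''), implicitly dividing by $P_\Bmc(\lambda)>0$ and taking the infimum over all models $\Pmc\models\Kmc$. Your explicit justification that the infimum commutes with the model-independent, positive-slope affine map, and your observation that the $P_\Pmc$ in the theorem's numerator should be read as $P_\Kmc$ once the infimum is taken, merely make explicit what the paper leaves implicit.
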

In particular, this means that also conditional probabilities can be computed through contextual probabilities,
with a small overhead of computing the probability (on the BN \Bmc) of the conditioning context $\lambda$.

As in the contextual case, if one is only interested in knowing that the subsumption is possible (that is, that
it has a positive probability), then one can exploit the complex context describing the inconsistent contexts which,
as mentioned before, can be precompiled to obtain the contexts in which a subsumption relation must be 
satisfied. However, in this case, entailment between contexts is not sufficient; one must still compute the 
probability of the contextual subsumption.
\begin{corollary}
Let $\Kmc=(\Tmc,\Bmc)$ be a consistent KB, $C,D$ concepts, and $\kappa,\lambda$ contexts s.t.\
$P_\Bmc(\lambda)>0$. $P_\Kmc((C\sqsubseteq D)^\kappa\mid \lambda)>0$ iff
$P_\Pmc((C\sqsubseteq D)^{\lambda\land\kappa}) > 1 - P_\Bmc(\lambda)$.
\end{corollary}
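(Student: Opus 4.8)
The plan is to obtain the equivalence as an immediate algebraic consequence of Theorem~\ref{thm:cond:sub}, which already supplies a closed-form expression for the conditional subsumption probability. All of the substantive work---relating the conditional probability to the contextual probability through the material-implication flavour of the contextual semantics, and discarding the worlds that violate $\lambda$---has been discharged in the derivation preceding that theorem. Consequently the corollary collapses to the elementary observation of when a single fraction is strictly positive, and I expect no genuine difficulty.

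Concretely, I would begin from the identity guaranteed by Theorem~\ref{thm:cond:sub},
$$P_\Kmc((C\sqsubseteq D)^\kappa\mid\lambda) = \frac{P_\Pmc((C\sqsubseteq D)^{\lambda\land\kappa}) + P_\Bmc(\lambda) - 1}{P_\Bmc(\lambda)},$$
and invoke the standing hypothesis $P_\Bmc(\lambda)>0$, which ensures that the denominator is a strictly positive quantity. A fraction with a positive denominator is strictly positive exactly when its numerator is, so the chain of equivalences
$$P_\Kmc((C\sqsubseteq D)^\kappa\mid\lambda) > 0 \iff P_\Pmc((C\sqsubseteq D)^{\lambda\land\kappa}) + P_\Bmc(\lambda) - 1 > 0 \iff P_\Pmc((C\sqsubseteq D)^{\lambda\land\kappa}) > 1 - P_\Bmc(\lambda)$$
delivers precisely the asserted biconditional.

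Since the argument is a one-line manipulation over a fixed positive denominator, there is essentially no obstacle to overcome. The only point that warrants a moment of care is confirming that $P_\Bmc(\lambda)$ may legitimately be treated as a model-independent constant when rescaling the inequality: this holds because $P_\Bmc(\lambda)$ is computed from the joint distribution of the Bayesian network \Bmc alone, which every probabilistic model of the consistent KB \Kmc must reproduce by the consistency requirement in the definition of a probabilistic model. Hence dividing through by $P_\Bmc(\lambda)$ preserves the direction of the inequality uniformly, and the equivalence follows.
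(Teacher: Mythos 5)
Your proof is correct and matches the paper's intended argument exactly: the corollary is stated there as an immediate consequence of Theorem~\ref{thm:cond:sub}, obtained precisely by observing that the fraction is strictly positive iff its numerator is, given the fixed positive denominator $P_\Bmc(\lambda)$. Your added remark that $P_\Bmc(\lambda)$ is model-independent (forced by the consistency condition on probabilistic models) is a sensible point of care that the paper leaves implicit.
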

In other words, $P((C\sqsubseteq D)^\kappa\mid \lambda)>0$ iff $C$ is $p$-subsumed by $D$ in 
$\kappa\land\lambda$ with $p=1-P_\Bmc(\lambda)$.

\modify{
\begin{example}
Continuing Example \ref{exa:prob:sub}, suppose that we know that the context $\neg X$ holds; that is, the house 
was built before 1986. Following Theorem \ref{thm:cond:sub} we can compute
\begin{align*}
P_\Kmc((\text{Water}\sqsubseteq\text{Drinkable})^\emptyset \mid \neg X) = {} &
	\frac{P_\Kmc((\text{Water}\sqsubseteq\text{Drinkable})^{\neg X})+P_\Bmc(\neg X)-1}{P_\Bmc(\neg X)}
\end{align*}
The denominator can be read directly from the probability distribution in Figure \ref{fig:BN}; that is, 
$P_\Bmc(\neg X)=0.3$. 
From Theorem \ref{thm:sum}, the numerator becomes the sum of the probabilities of all worlds that become inconsistent
with the addition of the assertions $\{\text{Water}(a)^{\neg X},\neg\text{Drinkable}(a)^{\neg X}\}$. These are the 
same appearing in the last four rows of Table \ref{tab:worlds}. Hence, 
\begin{align*}
P_\Kmc((\text{Water}\sqsubseteq\text{Drinkable})^\emptyset \mid \neg X) = {} &
	\frac{0.216}{0.3} = 0.72.
\end{align*}
As expected, knowing that the house is pre-1986 increases the probability of having lead tubing, which means a decrease
in the probability of drinkable water.
\end{example}
}

\medskip

Analogously to Definitions~\ref{def:con:sub}, \ref{def:sub:probs}, and~\ref{def:sub:cprobs}, it is possible to
define the notions of consistency of a concept $C$ to hold in only some contexts, and based on it, the
(potentially conditional) probability of such a contextual consistency problem. As mentioned already, it is
well known that for every \ALC interpretation \Imc it holds that $C^\Imc=\emptyset$ iff $C^\Imc\subseteq\bot^\Imc$.
Hence, all these problems can be solved through a direct reduction to their related subsumption problem.

We now turn our attention to the problem of instance checking. In this problem, the ABox also plays a role.
Hence, we consider once again ontologies \Omc that can have in addition to GCIs, concept and role assertions.

\section{Instance Checking} 
\label{sec:IC}

We consider a probabilistic extension to the classical instance checking problem. In $\BALC$ we call this problem 
probabilistic instance checking and we define both a decision problem and probability calculation for 
it next, following the same pattern as in the previous section.

\begin{definition}[Instance]
Given a KB \Kmc and a context $\kappa$, the individual name $a$ is an \emph{instance} of the
concept $C$ in $\kappa$ w.r.t.\ \Kmc, written $\kbDL \models \balcStmnt{C(a)}{\kappa}$, iff for all 
probabilistic models $\probModelDL = (\mathcal{J}, P_\mathcal{J})$ of $\kbDL$ and for all $\Vmc\in\Jmc$
it holds that $\Vmc\models C(a)^\kappa$. 
\end{definition}
That is, $a$ is an instance of $C$ in $\kappa$ if every interpretation in each model \Pmc satisfies the assertion $C(a)^\kappa$\!.

Note that as before, instance checking in \ALC is a special case of this definition, that can be obtained 
by considering a BN with only one variable that is true with probability $1$. Notice that, contrary to the case
of satisfiability studied at the end of the previous section, it is not possible to reduce instance checking to subsumption
since an instance may be---and in fact usually is---caused by ABox assertions, which are ignored for subsumption tests.
\modify{The latter observation is not of great consequence though, since both instance checking and subsumption can be reduced to consistency checking.}
\begin{theorem}
Given an individual name $a\in N_I$, a concept $C$, a context $\kappa$, and a KB $\kbDL = (\Omc, \bnDL)$, 
$\Kmc\models C(a)^\kappa$ iff the KB $\Kmc'=(\Omc\cup\{(\neg C(a))^\kappa\},\Bmc)$ is inconsistent.
\end{theorem}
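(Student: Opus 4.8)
The plan is to prove the equivalence by reducing instance checking to the consistency problem for the augmented KB, mirroring the classical reduction but lifting it to the probabilistic, contextual setting. The key observation is that $\Kmc \models C(a)^\kappa$ asks whether \emph{every} $V$-interpretation in \emph{every} probabilistic model satisfies $C(a)^\kappa$, so its negation amounts to finding \emph{some} probabilistic model of \Kmc containing a $V$-interpretation that violates $C(a)^\kappa$. I would establish the contrapositive on both sides: $\Kmc \not\models C(a)^\kappa$ iff $\Kmc'$ is consistent.

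First I would prove the forward-contrapositive direction. Suppose $\Kmc'=(\Omc\cup\{(\neg C(a))^\kappa\},\Bmc)$ is consistent, and let $\Pmc=(\Jmc,P_\Jmc)$ be one of its probabilistic models. Then $\Pmc \models \Omc$ and, moreover, every $\Vmc\in\Jmc$ satisfies $(\neg C(a))^\kappa$. Since $P_\Jmc$ is consistent with \Bmc and every world $\omega$ with $P_\Bmc(\omega)>0$ is realised by some $\Vmc\in\Jmc$, I can pick a world $\omega$ with $\omega\models\kappa$ and $P_\Bmc(\omega)>0$ (such a world must exist, otherwise $\kappa$ would be trivially unsatisfiable in the BN and the statement degenerates; this edge case should be noted separately). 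For the realising $\Vmc$ with $\valFuncDL=\omega$, condition~(i) of Definition~\ref{def:model} fails because $\valFuncDL\models\kappa$, so satisfying $(\neg C(a))^\kappa$ forces $a^\Vmc\in(\neg C)^\Vmc$, i.e.\ $\Vmc\not\models C(a)^\kappa$. This same $\Pmc$ is also a model of \Kmc, since $\Omc\subseteq\Omc\cup\{(\neg C(a))^\kappa\}$ implies $\Pmc\models\Omc$ and consistency with \Bmc is unchanged. Hence $\Pmc$ witnesses $\Kmc\not\models C(a)^\kappa$.

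For the converse direction, assume $\Kmc\not\models C(a)^\kappa$, so there is a model $\Pmc=(\Jmc,P_\Jmc)$ of \Kmc and some $\Vmc_0\in\Jmc$ with $\Vmc_0\not\models C(a)^\kappa$. The task is to manufacture a model of $\Kmc'$. The difficulty is that the offending $\Vmc_0$ guarantees failure of $C(a)^\kappa$ only for its own valuation $\omega_0$, whereas a model of $\Kmc'$ must make $(\neg C(a))^\kappa$ hold in \emph{every} $V$-interpretation of the model. I would handle this by using the consistency characterisation of Theorem~\ref{thm:cons}: it suffices to show that for every world $\omega$ with $P_\Bmc(\omega)>0$, the restriction $(\Omc\cup\{\neg C(a)^\kappa\})_\omega$ is consistent. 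For worlds with $\omega\not\models\kappa$ the new assertion is simply absent from the restriction, so consistency is inherited from \Kmc. For worlds with $\omega\models\kappa$, the existence of $\Vmc_0$ (whose valuation satisfies $\kappa$ and whose interpretation witnesses $\neg C(a)$ together with $\Omc_{\omega_0}$) shows $\Omc_{\omega_0}\cup\{\neg C(a)\}$ is satisfiable; I then argue that this satisfiability transfers to each such $\omega$, either because the restrictions coincide on the $\kappa$-satisfying worlds relevant to $C(a)$, or by constructing a fresh classical model of $\Omc_\omega\cup\{\neg C(a)\}$ directly from consistency of $\Omc_\omega$ augmented with the negated assertion.

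The main obstacle I anticipate is precisely this transfer step in the converse direction: a single violating $V$-interpretation $\Vmc_0$ gives local information at one world, but $\Kmc'$-consistency is a global requirement over all $\kappa$-satisfying worlds of positive probability. The clean resolution is to bypass the given model entirely and instead invoke Theorem~\ref{thm:cons} at the level of restrictions, reducing the claim to showing that $\Omc_\omega\cup\{\neg C(a)\}$ is classically \ALC-consistent for each relevant $\omega$. This in turn follows once I observe that $\Kmc\not\models C(a)^\kappa$ exactly means some world $\omega$ with $\omega\models\kappa$, $P_\Bmc(\omega)>0$ admits a classical model of $\Omc_\omega$ in which $a$ is not a $C$; combined with the fact that satisfiability of $\Omc_\omega\cup\{\neg C(a)\}$ is, by classical \ALC semantics, equivalent to $a$ not being a forced instance of $C$ in $\Omc_\omega$. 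Assembling these local classical models via the construction in the proof of Theorem~\ref{thm:cons} then yields the desired probabilistic model of $\Kmc'$, completing the equivalence.
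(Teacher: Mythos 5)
Your forward direction (from a model of $\Kmc'$ to a countermodel of $C(a)^\kappa$) is sound outside the degenerate case, but the converse direction has a genuine gap, and it is exactly the step you yourself flagged as the main obstacle: the ``transfer''. From $\Kmc\not\models C(a)^\kappa$ you obtain only \emph{one} positive-probability world $\omega_0\models\kappa$ whose restriction is consistent with $\neg C(a)$, whereas Theorem~\ref{thm:cons} applied to $\Kmc'$ requires $\Omc_\omega\cup\{\neg C(a)\}$ to be consistent for \emph{every} positive-probability $\omega\models\kappa$; these are independent classical questions because $\Omc_\omega$ varies with $\omega$. Neither of your proposed fixes can succeed: the restrictions do not coincide across $\kappa$-worlds, and consistency of $\Omc_\omega$ alone does not yield consistency of $\Omc_\omega\cup\{\neg C(a)\}$, since $\Omc_\omega$ may entail $C(a)$. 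Concretely, take a single Boolean variable $X$ with $0<P_\Bmc(X)<1$, $\Omc=\{C(a)^{\{X\}}\}$, and $\kappa=\emptyset$. Then $\Kmc\not\models C(a)^\emptyset$, witnessed by a model whose interpretation at the world $\neg X$ has $a\notin C$; yet $\Kmc'$ is inconsistent, because the restriction at the positive-probability world $X$ contains both $C(a)$ and $\neg C(a)$. So the implication ``$\Kmc\not\models C(a)^\kappa$ implies $\Kmc'$ consistent'' fails, and no argument can close this: inconsistency of $\Kmc'$ is an \emph{existential} statement over positive-probability $\kappa$-worlds, while the entailment $\Kmc\models C(a)^\kappa$ is a \emph{universal} one. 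Note that the paper states this theorem without proof, so there is nothing to compare against directly; the obstacle you hit reflects a real quantifier mismatch in the statement as given. What the machinery does support is the world-wise version: $\Kmc\models C(a)^\kappa$ iff every positive-probability world satisfying $\kappa$ satisfies $\phi_{\Kmc'}^\bot$, i.e.\ an entailment between contexts ($\kappa\land\neg\phi_\Bmc\models\phi_{\Kmc'}^\bot$) rather than mere inconsistency of $\Kmc'$.

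The degenerate case you set aside, $P_\Bmc(\kappa)=0$, is also not a mere technicality: there your \emph{forward} direction fails too. For a consistent $\Kmc$, every $\Vmc$ occurring in a probabilistic model has a positive-probability valuation (since $P_\Jmc(\Vmc)>0$ and $P_\Jmc$ must agree with $P_\Bmc$ world by world), hence $\Vmc\models C(a)^\kappa$ vacuously by clause (i) of Definition~\ref{def:model}; thus $\Kmc\models C(a)^\kappa$ holds while $\Kmc'$ remains consistent, its positive-probability restrictions coinciding with those of $\Kmc$. So the equivalence needs $P_\Bmc(\kappa)>0$ even for the half you did establish. Your instinct to route everything through Theorem~\ref{thm:cons} and classical restrictions is the right machinery---it is the same machinery the paper uses throughout---but applied carefully it shows the stated iff cannot be recovered in full generality.
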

In particular, this means that instance checking is at most as hard as deciding consistency of a KB. As mentioned already,
it is also at least as hard as instance checking in the classical \ALC. Hence we get the following result.
\begin{corollary}
\modify{Instance checking w.r.t.\ \BALC KBs is \ExpTime-complete.}
\end{corollary}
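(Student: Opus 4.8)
The plan is to establish \ExpTime-completeness by proving matching upper and lower bounds, both of which reduce to results already available in the paper. First I would prove membership in \ExpTime. By the theorem immediately preceding this corollary, deciding $\Kmc\models C(a)^\kappa$ is equivalent to deciding inconsistency of the KB $\Kmc'=(\Omc\cup\{(\neg C(a))^\kappa\},\Bmc)$. This reduction is polynomial: the Bayesian network $\Bmc$ is left untouched, and the ontology grows only by the single assertion $(\neg C(a))^\kappa$, whose size is linear in that of $C$ and $\kappa$; converting $\neg C$ to NNF is a further linear-time step that alters neither the semantics nor the asymptotic size. Since \BALC KB consistency is in \ExpTime by Corollary~\ref{cor:et}, instance checking inherits the same upper bound.

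For the lower bound I would embed classical \ALC instance checking into the \BALC problem, exactly along the lines sketched earlier in the paper and recalled just before this corollary. Given an \ALC ontology $\Omc$ and a query $C(a)$, I would build the \BALC KB $\Kmc=(\Omc',\Bmc)$ in which $\Bmc$ consists of a single Boolean variable that holds with probability $1$, and $\Omc'$ is obtained by attaching the empty context to every axiom of $\Omc$. Because the empty context is satisfied by every valuation, a $V$-interpretation is a model of $\Omc'$ precisely when its underlying \ALC interpretation is a model of $\Omc$; moreover the unique world of positive probability carries all the axioms, so the probabilistic models of $\Kmc$ correspond exactly to the \ALC models of $\Omc$. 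Hence $\Kmc\models C(a)^\emptyset$ iff $\Omc\models C(a)$ in the classical sense. Since instance checking in \ALC is already \ExpTime-hard, the same hardness transfers to \BALC, and combining the two bounds yields \ExpTime-completeness.

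I do not anticipate a genuine obstacle here, since the heavy lifting has already been carried out: the upper bound rests entirely on the preceding reduction theorem together with Corollary~\ref{cor:et}, while the lower bound is a routine conservativity argument. The only point that merits care is verifying that the hardness embedding is \emph{faithful in both directions}---that labelling every axiom with the empty context over a one-world network neither discards any classical model nor introduces spurious probabilistic ones---but this is precisely the conservativity of \BALC over \ALC noted earlier in the excerpt, so it requires no new work.
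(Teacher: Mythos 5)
Your proposal is correct and follows essentially the same route as the paper: the upper bound via the preceding theorem reducing $\Kmc\models C(a)^\kappa$ to inconsistency of $\Kmc'=(\Omc\cup\{(\neg C(a))^\kappa\},\Bmc)$ together with Corollary~\ref{cor:et}, and the lower bound via the observation that classical \ALC instance checking is the special case of a \BALC KB with a trivial one-variable BN holding with probability $1$. The only difference is that you spell out the faithfulness of the embedding and the polynomiality of the reduction explicitly, which the paper leaves implicit.
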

Let us now consider the probabilistic entailments related to instance checking.
\begin{definition}[instance probability]
The \emph{probability of an instance} in a probabilistic model $\probModelDL = (\mathcal{J}, P_{\mathcal{J}})$ of 
the KB $\kbDL$ is
$$
P_{\probModelDL} (\balcStmnt{C(x)}{\kappa}) = 
	\sum_{\vModelDL \in \mathcal{J}, \vModelDL \models \balcStmnt{C(x)}{\kappa}} P_{\mathcal{J}} (\vModelDL).
$$
The \emph{instance probability} w.r.t.\ a KB $\kbDL$ is
$$
P_{\kbDL} (\balcStmnt{C(x)}{\kappa}) = 
	\inf_{\probModelDL \models \kbDL} P_{{\probModelDL}}(\balcStmnt{C(x)}{\kappa}).
$$
The conditional probability of an instance in a particular probabilistic model 
$\probModelDL = (\mathcal{J}, P_\mathcal{J})$, \modify{where $P_\Bmc(\lambda)>0$}, is
$$
P_\probModelDL (\balcStmnt{C(x)}{\kappa} \mid \lambda) = 
	\frac{\sum_{\vModelDL \in \mathcal{J}, \valFuncDL \models \lambda, \vModelDL \models \balcStmnt{C(x)}{\kappa}} P_\mathcal{J}(\vModelDL)}{P_\Bmc(\lambda)},
$$
The probability of the conditional instance in $\kbDL$ is:
$$P_\kbDL (\balcStmnt{C(x)}{\kappa} \mid \lambda) = \inf_{\probModelDL \models \kbDL} P_\probModelDL (\balcStmnt{C(x)}{\kappa} \mid \lambda)$$
\end{definition}
The probability of all instance checks for an inconsistent KB is always $1$ to keep our definitions consistent with 
probability theory, just as we did for the probability of subsumption over an inconsistent KB in the previous section.

As we did for subsumption, we can exploit the reasoning techniques that were developed for deciding inconsistency of a \BALC 
KB to find out also the contextual and conditional probabilities of an instance. Moreover, the method can be 
further optimised in the cases where we are only interested in probabilistic bounds. In particular, we can
adapt Theorem~\ref{thm:cond:sub} to this case, yielding the following result.
\begin{theorem} 
\label{balc:condSubReduc}
$P_\kbDL (\balcStmnt{C(x)}{\kappa} \mid \lambda) = 
	\frac{P_\kbDL (\balcStmnt{C(x) }{{\kappa \land \lambda}}) + P_\Bmc(\lambda) - 1}{P(\lambda)}.$
\end{theorem}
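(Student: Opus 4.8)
The plan is to mirror, essentially verbatim, the derivation that established Theorem~\ref{thm:cond:sub}, replacing the contextual subsumption $(C\sqsubseteq D)^\kappa$ throughout by the contextual assertion $C(x)^\kappa$. The key enabling observation is that the satisfaction condition supplied by Definition~\ref{def:model} has the same shape for both kinds of axiom: a $V$-interpretation $\Vmc$ satisfies $C(x)^\kappa$ iff $\valFuncDL\not\models\kappa$ or $(\Delta^\Vmc,\cdot^\Vmc)$ satisfies $C(x)$. Hence the same case split drives the whole computation, and nothing particular to subsumption was ever used.

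First I would fix an arbitrary probabilistic model $\Pmc=(\Jmc,P_\Jmc)$ of \Kmc and expand the numerator of $P_\Pmc(C(x)^\kappa\mid\lambda)$ as given in the instance-probability definition. Splitting the condition $\Vmc\models C(x)^\kappa$ according to whether $\valFuncDL\models\kappa$ decomposes the sum into the interpretations with $\valFuncDL\models\lambda$ but $\valFuncDL\not\models\kappa$ (which satisfy the axiom vacuously) and those with $\valFuncDL\models\lambda\land\kappa$ that genuinely satisfy $C(x)$. Because $P_\Jmc$ is consistent with \Bmc, the total mass of worlds satisfying $\lambda$ is the fixed marginal $P_\Bmc(\lambda)$, so the vacuous group contributes exactly $P_\Bmc(\lambda)-P_\Bmc(\lambda\land\kappa)$. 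For the second group I would invoke the contextual instance probability $P_\Pmc(C(x)^{\lambda\land\kappa})$, whose own expansion isolates precisely the term $\sum_{\valFuncDL\models\lambda\land\kappa,\,\Vmc\models C(x)}P_\Jmc(\Vmc)$, the rest of that expansion (worlds with $\valFuncDL\not\models\lambda\land\kappa$) summing vacuously to $1-P_\Bmc(\lambda\land\kappa)$. Substituting and cancelling the $P_\Bmc(\lambda\land\kappa)$ terms yields the per-model identity
\[
P_\Pmc(C(x)^\kappa\mid\lambda)=\frac{P_\Pmc(C(x)^{\lambda\land\kappa})+P_\Bmc(\lambda)-1}{P_\Bmc(\lambda)}.
\]

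To pass from this single-model identity to the statement about \Kmc, I would take the infimum over all $\Pmc\models\Kmc$ on both sides. The right-hand side is an \emph{affine, strictly increasing} function of the only model-dependent quantity $P_\Pmc(C(x)^{\lambda\land\kappa})$: the multiplier $1/P_\Bmc(\lambda)$ is positive, and both it and the additive term $(P_\Bmc(\lambda)-1)/P_\Bmc(\lambda)$ depend only on \Bmc, not on the choice of $\Pmc$, since every admissible $P_\Jmc$ is consistent with \Bmc and therefore assigns the fixed marginal $P_\Bmc(\lambda)$. Consequently the infimum commutes with the affine map, and $\inf_{\Pmc\models\Kmc}P_\Pmc(C(x)^{\lambda\land\kappa})$ is by definition $P_\Kmc(C(x)^{\lambda\land\kappa})$, delivering the claimed equation.

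The main obstacle is therefore not the algebra—which is a routine transcription of the subsumption argument—but the justification that the infimum commutes with the affine transformation, which rests entirely on the fact that $P_\Bmc(\lambda)$ is model-independent by the consistency requirement on $P_\Jmc$. Two boundary conditions should also be flagged, exactly as in the subsumption case: the standing hypothesis $P_\Bmc(\lambda)>0$ that makes the conditional probability well defined, and the convention fixing the probability of every instance at $1$ when \Kmc is inconsistent, under which the identity degenerates consistently.
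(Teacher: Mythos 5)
Your proposal is correct and follows essentially the same route as the paper, which proves this theorem precisely by transcribing the numerator decomposition used for Theorem~\ref{thm:cond:sub} (splitting on $\valFuncDL\models\kappa$, using consistency with $\Bmc$ to obtain the fixed marginal $P_\Bmc(\lambda)$, and expanding $P_\Pmc(C(x)^{\lambda\land\kappa})$ to cancel the $P_\Bmc(\lambda\land\kappa)$ terms) from assertions in place of GCIs. Your explicit justification that the infimum commutes with the affine, strictly increasing map with model-independent coefficients is a point the paper leaves implicit, but it is a refinement of the same argument rather than a different approach.
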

\modify{
\begin{example}
Suppose that we extend the ontology from Example \ref{exa:ont} with the assertions $\text{Pipe}(p)^{\neg X}$,
$\text{Lead}(m)^Z$, and
$\text{contains}(p,m)^{\neg Y}$. To compute $P_\Kmc(\text{LeadPipe}(p)^\emptyset)$ we must find out what
$P_\Bmc(\neg X,\neg Y,Z)$ is. Following the chain rule from the BN, this is computed as
$P(\neg X)P(\neg Y\mid \neg X)P(Z\mid \neg X,\neg Y)=0.3\cdot 0.3\cdot 0.6=0.054$. That is, even though we have
(probabilistic) information about a pipe $p$, which may contain a material which may be lead, the probability that $p$
is a lead pipe is very low.
\end{example}
}
With Theorem \ref{balc:condSubReduc}, we conclude the technical analysis of contextual and probabilistic reasoning tasks over 
\BALC KBs. As it can be readily seen, the general approach is based on being able to detect the contexts 
in which the conditions imposed by the ontology are inconsistent. The probabilistic computations rely on
cross-checking those contexts with the distribution described by the BN.

\section{Conclusions}
\label{sec:conclusions}

We have presented a new probabilistic extension of the DL \ALC, which is based on the ideas of Bayesian ontology
languages. In this setting, knowledge bases contain certain knowledge, which is dependent on the uncertain context where
it holds. Our work extends
the results on \BEL~\cite{CePe-IJCAR14,CePe-JELIA14} to a propositionally closed ontology language. The main
notions follow the basic ideas of general Bayesian ontology languages as presented in \cite{CePe-JAR16}; however, by 
focusing on a specific logic---rather than on a general notion of ontology language---we are able to produce a tableaux-based 
decision algorithm for KB consistency, in contrast
to the generic black-box algorithms which exist in the literature. Our algorithm extends the classical tableau
algorithm for \ALC with techniques originally developed for axiom pinpointing. The main differences in our approach are the
use of multi-valued variables in the definition of the contexts, and the possibility of having complex contexts
(instead of only unique variables) labelling individual axioms.
In general, we have shown that adding 
context-based uncertainty to an ontology does not increase the complexity of reasoning in this logic: all
(probabilistic) reasoning problems can still be solved in exponential time. Of course, this tight complexity bound is not
extremely surprising given the \ExpTime-hardness of the underlying classical DL.

Theorems~\ref{thm:formula}--\ref{thm:bn} yield an effective decision method \modify{for} \BALC KB consistency, through
the computation and handling of two complex contexts, which describe the logical and probabilistic properties
of the worlds and the restricted ontologies they define.
Notice that $\phi_\Bmc$ can be computed in linear time on the size of \Bmc, and satisfiability
of the context $\phi_\Kmc^\bot\land\phi_\Bmc$ can be checked in non-deterministic polynomial time in the size
of this context, following techniques akin to those from propositional satisfiability \cite{BHSVW09}. 
However, the tableau algorithm for computing $\phi_\Kmc^\bot$ is not optimal w.r.t.\ worst-case
complexity. In fact, this algorithm requires double exponential time in the worst case, although the context it computes is only
of exponential size. The benefit of this method, as in the classical case, is that it provides a better behaviour
in the average case; indeed, depending on the input ontology, it may not need to verify all the possible contexts, but simply
construct a model. As a simple example, if the TBox is empty, the process stops in exponential time. 
To further improve the efficiency of our approach, one can think of adapting the methods
from~\cite{ZBRCL18} to construct a compact representation of the context---akin to a binary decision diagram
(BDD)~\cite{Lee-59,BrRB-90} for multi-valued variables---allowing for efficient weighted model counting. A
task for future work is to exploit these data structures for practical development of our methods.

Recall that the most expensive part of our approach is the computation of the context $\phi_\Kmc^\bot$. By
slightly modifying the KB, we have shown that one computation of this context suffices to solve different
problems of interest; in particular, contextual and conditional entailments---being subsumption, satisfiability, or
instance checking---can be solved using $\phi_{\Kmc'}^\bot$, for an adequately constructed $\Kmc'$,
regardless of the contexts under consideration. 

An important next step will be to implement the methods described here, and compare the efficiency of
our system to other probabilistic DL reasoners based on similar semantics. In particular, we would like to
compare against the tools from~\cite{ZBRCL18}. Even though this latter system is also based on an extension
of the tableaux algorithm for DLs, and use multiple-world semantics closely related to ours, a direct 
comparison would be unfair. Indeed,~\cite{ZBRCL18} makes use of stronger independence assumptions than
ours. However, a well-designed experiment can shed light on the advantages and disadvantages of each
method.

Another interesting problem for future work is to extend the query language beyond instance queries. To maintain
some efficiency, this may require some additional restrictions on the language or the probabilistic structure. A more
detailed study of this issue is needed.

\end{document}